\documentclass[a4paper,reqno]{amsart}

\usepackage[utf8]{inputenc}
\usepackage{amsmath,amssymb,amsfonts,amsthm}
\usepackage{tikz}
\usetikzlibrary{calc}
\usepackage[normalem]{ulem}
\usepackage{hyperref}

\usepackage{thmtools}
\usepackage{thm-restate}

\usepackage[normalem]{ulem}
\usepackage{hyperref}

\theoremstyle{plain}
\newtheorem{theorem}{Theorem}
\newtheorem{lemma}[theorem]{Lemma}
\newtheorem{corollary}[theorem]{Corollary}
\newtheorem{question}[theorem]{Question}

\theoremstyle{definition}
\newtheorem{definition}[theorem]{Definition}
\newtheorem{remark}[theorem]{Remark}
\newtheorem{principle}[theorem]{Principle}
\newtheorem{proposition}[theorem]{Proposition}

\newcommand{\UU}{\mathcal{U}}
\newcommand{\UUp}{\UU_\bullet}

\newcommand{\proj}{\mathsf{p}_1}

\newcommand{\Prop}{\mathsf{hProp}}
\newcommand{\refl}{\mathsf{refl}}

\newcommand{\ct}{%
  \mathchoice{\mathbin{\raisebox{0.5ex}{$\displaystyle\centerdot$}}}%
             {\mathbin{\raisebox{0.5ex}{$\centerdot$}}}%
             {\mathbin{\raisebox{0.25ex}{$\scriptstyle\,\centerdot\,$}}}%
             {\mathbin{\raisebox{0.1ex}{$\scriptscriptstyle\,\centerdot\,$}}}
}

\newcommand{\istype}[1]{\mathsf{is}\mbox{-}{#1}\mbox{-}\mathsf{type}}
\newcommand{\isprop}{\mathsf{is}\mbox{-}\mathsf{prop}}

\newcommand{\Susp}{\mathbf \Sigma}

\newcommand{\trunc}[2]{\mathopen{}\left\Vert #2\right\Vert_{#1}\mathclose{}}
\newcommand{\brck}[1]{\trunc{}{#1}}
\newcommand{\tproj}[3][]{\mathopen{}\left|#3\right|_{#2}^{#1}\mathclose{}}
\newcommand{\bproj}[1]{\tproj{}{#1}}
\newcommand{\defeq}{:\equiv} 
\newcommand{\isequiv}{\mathsf{isequiv}}

\newcommand{\bool}{\mathbf 2}
\newcommand{\nil}{\mathsf{nil}}

\newcommand{\base}{\mathsf{base}}

\newcommand{\unit}{\mathsf{unit}}
\newcommand{\cons}{\mathsf{cons}}
\newcommand{\icons}{\mathsf{icons}}
\newcommand{\iseqconstructor}{\mathsf{iseq}}

\newcommand{\emptyt}{\ensuremath{\mathbf{0}}}
\newcommand{\unitt}{\ensuremath{\mathbf{1}}}

\newcommand{\loops}{\mathsf{loops}}


\newcommand{\concat}{+\hspace*{-7pt} +}

\newcommand{\signed}[1]{{#1}^{\pm}}

\usepackage{accents}
\newcommand*{\dt}[1]{%
  \accentset{\mbox{\large\bfseries .}}{#1}}

\newcommand{\invert}[1]{\dt {#1}}

\newcommand{\Red}{\mathsf{Red}}
\newcommand{\auxRed}{\mathsf{R}}
\newcommand{\red}{\mathsf{red}}
\newcommand{\List}{\mathsf{List}}
\newcommand{\LS}[1]{\List(\signed #1)}
\newcommand{\N}{\mathbb N}
\renewcommand{\concat}{}
\newcommand{\length}{\mathsf{length}}

\newcommand{\inl}{\mathsf{inl}}
\newcommand{\inr}{\mathsf{inr}}

\newcommand{\sm}[1]{\Sigma \left(#1\right).\,}

\newcommand{\swap}{\mathsf{sw}}
\newcommand{\overlap}{\mathsf{ov}}
\newcommand{\trconst}{\mathsf{tr}}

\newcommand{\redIsNeutral}{\mathsf{red}\mbox{-}\mathsf{is}\mbox{-}\mathsf{neutral}}


\newcommand{\Sn}[1]{\mathbb{S}^{#1}}

\newcommand{\FG}[1]{\mathbf{F}(#1)}
\newcommand{\FGwoArg}{\mathbf{F}}
\newcommand{\NR}[1]{\mathbf{N}(#1)}

\newcommand{\north}{\mathsf{N}}
\newcommand{\south}{\mathsf{S}}
\newcommand{\merid}{\mathsf{mer}}

\newcommand{\Wedge}{\mathsf{W}}

\newcommand{\LNF}[1]{\mathsf{LNF}(#1)}


\begin{document}

\title{Free Higher Groups in Homotopy Type Theory}

\author{Nicolai Kraus \and Thorsten Altenkirch}
\date{}

\thanks{This work was supported by 
the Engineering and Physical Sciences Research Council (EPSRC),
grant reference EP/M016994/1,
and by USAF, Airforce office for scientific research, award
FA9550-16-1-0029.\\
\emph{Disclaimer as requested by publisher:} This paper has been published in the proceedings of LiCS'18, \url{https://doi.org/10.1145/3209108.3209183}.}

\begin{abstract}
 Given a type $A$ in homotopy type theory (HoTT), we can define the \emph{free $\infty$-group on $A$} as
 the loop space of the suspension of $A + 1$.
 Equivalently, this free higher group can be defined as a higher inductive type $\FG{A}$ with constructors $\unit : \FG{A}$,
 $\cons : A \to \FG{A} \to \FG{A}$, and conditions saying that every $\cons(a)$ is an auto-equivalence on $\FG{A}$.
 Assuming that $A$ is a set (i.e.\ satisfies the principle of unique identity proofs), we are interested in the question whether $\FG{A}$ is a set as well,
 which is very much related to an open problem in the HoTT book~\cite[Ex.~8.2]{hott-book}.
 We show an approximation to the question, namely that the fundamental groups of $\FG{A}$ are trivial, i.e.\ that $\trunc 1 {\FG{A}}$ is a set.
\end{abstract}

\maketitle

\section{Introduction}

An important feature of \emph{Martin-L\"of type theory} (MLTT) is the identity type which makes it possible to express equality inside type theory.
More precisely, if $A$ is a type (in any context), and $x,y : A$ are terms, then $\mathsf{Id}_A(x,y)$ is a type whose inhabitants represent proofs that $x$ and $y$ are equal.
As it is common nowadays, we write $x=_A y$ or $x=y$ instead of $\mathsf{Id}_A(x,y)$, and call its elements simply \emph{equalities}.
\emph{Homotopy type theory} (HoTT) embraces the fact that $x=y$ may come with interesting structure.
This means that, in many cases, we do not only care about the question whether $x$ and $y$ are equal, but also \emph{how} or \emph{in which ways} they are equal.
For example, due to Voevodsky's univalence axiom, $\bool =_\UU \bool$ is equivalent to $\bool$.
Here, $\bool$ is the type of booleans, $\UU$ is a type universe, and the two equalities stem from the two ways in which $\bool$ is equivalent to itself.
Another non-trivial example is $\base =_{\Sn 1} \base$ which turns out to be equivalent to the type of integers~\cite{Licata:2013:CFG:2591370.2591407}, where $\Sn 1$ is the ``circle'' in HoTT.

A type where each such type of equalities can have at most one inhabitant is called a \emph{set}, and further said to satisfy the principle of \emph{unique identity proofs} (UIP).
An example for a set is the type of natural numbers: $0 =_\N 1$ is equivalent to the empty type $\emptyt$, while $0 =_\N 0$ is equivalent to the unit type $\unitt$, and so on.
Many algebraic structures can be implemented straightforwardly if we are happy to do everything with sets; for example, \cite[Def~6.11.1]{hott-book} defines 
a set-level group to be a tuple $(G, \circ, u, \cdot^{-1}, \ldots)$, where $G$ is a set, together with a multiplication operation $\circ : G \times G \to G$, a unit element $u$, an inversion operator $G \to G$, and equalities expressing the usual laws. 
The book then further shows how one can construct the free set-level group over a given set (see \cite[Chp 6.11]{hott-book}).

More interesting and challenging is it to define higher-level structures, not restricted to sets.
Since we use HoTT (rather than, say, set theory) as our foundation, it is natural to attempt this.
For example, it has been known for some time that externally, every type carries the structure of an \emph{$\infty$-groupoid}~\cite{lumsdaine_weakOmegaCatsFromITT,bg:type-wkom}.
Internally, we can play a trick and use the following definition, which probably can be considered ``HoTT folklore'':

\begin{definition} \label{def:cheating-group-def}
 An \emph{$\infty$-group} is a type $G$ which is equivalent to a loop space.
 More precisely, $G$ is a group means that we have a connected type $X$ and a point $x:X$ together with an equivalence $G \simeq (x=x)$.
 If $G$ is an $\infty$-group represented by $(X,x)$ and $H$ is a second $\infty$-group represented by $(Y,y)$, then a \emph{homomorphism} $G \to_{\infty\mathsf{grp}} H$ is a pointed function $(X,x) \to_\bullet (Y,y)$.
\end{definition}

In other words, an \emph{$\infty$-group} is a type that admits a \emph{delooping}.
Clearly, the unit element of this group is $\refl_x$, and composition is given by composition of equalities.
Some theory of higher groups in homotopy type theory has very recently been developed by Buchholtz, van Doorn and Rijke~\cite{buchVDrijke:highergroups}.

It is worth noting that Definition~\ref{def:cheating-group-def} makes use of the fact that a suitable notion of an (untruncated) group already naturally exists in HoTT, which is not the case for many other interesting structures.
Defining untruncated algebraic structures in general and \emph{directly} is an important open problem in HoTT.
To see why this is hard, let us start from the set-level definition $(G,\circ, u, \cdot^{-1}, \ldots)$ above and remove the condition that $G$ is a set.
The equalities which guarantee that the multiplication is associative, the unit is neutral, and inverses cancel, are not sufficient anymore; they would not give a well-behaved definition of a higher group. 
For example, one may ask oneself how one would prove that $x \circ (y \circ (z \circ w))$ equals $((x \circ y) \circ z) \circ w$: there are two canonical ways, and  these should coincide (``MacLane's pentagon''), but any such rule that we add would have to satisfy its own coherences.
It is currently unknown whether it is possible to complete this sort of definition in a satisfactory way.
In a nutshell, the problem is that the usual definitions would, if expressed internally in type theory, amount to infinite structures of coherences. In classical homotopy theory, these conditions are often organised in the form of an \emph{operad} \cite{sta:h-spaces,ada:infinite-loop-spaces}, but a representation of such structures that can be written down in type theory has not been discovered so far.
This is certainly not for a lack of trying; cf.\ the much-discussed open problem of defining \emph{semisimplicial types}~\cite{uf:semisimplicialtypes}.

In this paper, we study the \emph{free $\infty$-group} over a type $A$.
It is folklore in homotopy type theory that a suitable definition of the free $\infty$-group over $A$ is given by the loop space of a
``wedge of $A$-many circles'', or put differently, the suspension $\Susp (A + \unitt)$.
For us, it will however be more helpful to give a more explicit construction of this free higher group which we call $\FG A$.
We define $\FG A$ to be the \emph{higher inductive type} (HIT)~\cite[Chp 6]{hott-book} which as constructors has a neutral element $\unit : \FG A$ and a multiplication operation $\cons : A \to \FG A \to \FG A$, together with conditions ensuring that each $\cons(a) : \FG A \to \FG A$ is an equivalence.
This definition encodes the a priori infinite tower of coherence condition suitably and it will turn out that it is equivalent to the loop space of $\Susp(A + \unitt)$.

The most basic properties that one would expect from a free higher group are easy to prove.
More intriguing is the question what the free $\infty$-group has to do with the free set-based group.
Clearly, we would want the former to be a generalisation of the latter.
The most obvious way of interpreting this is to ask whether, for a \emph{set} $A$, the free higher group $\FG A$ coincides with the construction of the set-based higher group.
It turns out that the central question is the following:

\begin{question} \label{eq:main-question}
 If $A$ is a set, is $\FG{A}$ a set as well?
\end{question}
One reason why we believe that Question~\ref{eq:main-question} is hard is that a slight generalisation of it is a known open problem in HoTT which has been recorded in the book (see~\cite[Ex.~8.2]{hott-book}).
To be precise, the open problem asks whether, for a set $A$, the suspension $\Susp (A)$ is a $1$-type; our question is equivalent to asking whether $\Susp (A + \unitt)$ is a $1$-type.
A positive answer to the open problem would imply a positive answer to our Question~\ref{eq:main-question}, but we do not expect that our question is fundamentally easier.
(Recall from the book \cite{hott-book} that the suspension $\Susp(A)$ is the HIT with constructors $\north, \south : \Susp(A)$ and $\merid : A \to \north = \south$, for ``north'', ``south'', ``meridian''.)

The core of our paper consists of a proof of a weakened version, a first approximation, of Question~\ref{eq:main-question}.
Our main result can be phrased as follows:

\begin{restatable}{theorem}{mainthm}
\label{thm:mainthm}
 If $A$ is a set, then all fundamental groups of $\FG{A}$ are trivial.
 In other words, $\trunc 1 {\FG{A}}$ is a set.
\end{restatable}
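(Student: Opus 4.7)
Plan: I would analyse $\FG{A}$ via the combinatorics of signed lists $\LS{A}$ over $A$ together with a rewriting relation on them. Since each $\cons(a)$ is an auto-equivalence of $\FG{A}$, it has an inverse $\icons(a)$, and consequently every signed list $\ell$ can be interpreted as an element $[\ell] : \FG{A}$ by iterated application of the $\cons(a)$ and $\icons(a)$. To prove the theorem it suffices to show that $\trunc{0}{\unit =_{\FG{A}} \unit}$ is contractible: the statement ``$\trunc{0}{x = x}$ is contractible'' is propositional and preserved by $\cons(a)$, because $\cons(a)$ being an equivalence induces $(x = x) \simeq (\cons(a)(x) = \cons(a)(x))$; so by induction on $\FG{A}$ the problem reduces to the single basepoint $\unit$.

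The combinatorial engine is a reduction relation $\onestepred$ on $\LS{A}$ cancelling adjacent inverse pairs, with multi-step closure $\mred$. Reduced lists form $\LNF{A}$, a set because $A$ is. Termination is immediate since lengths strictly decrease, so by a Newman-style argument it suffices to establish local confluence; the critical pairs have two shapes: non-overlapping cancellations, handled by $\swap$, and overlapping ones such as in $a \cdot \invert{a} \cdot a$, handled by $\overlap$. On the type-theoretic side, each one-step reduction $\ell \onestepred \ell'$ induces a path $[\ell] = [\ell']$ in $\FG{A}$ built from the coherence data of the corresponding $\cons(a)$ as an equivalence. Hence every list $\ell$ with $\ell \mred \nil$ produces a loop at $\unit$, and a propositional encode--decode step using lists as codes on $\FG{A}$ shows conversely that every loop at $\unit$ is merely obtained from such a pair $(\ell, \ell \mred \nil)$.

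The crux is to show that two different reduction sequences from the same starting list produce paths in $\FG{A}$ that are propositionally equal; this is precisely where the argument stops short of proving set-ness of $\FG{A}$. The $\swap$ case provides a genuine commuting square (disjoint cancellations act pointwise), but the $\overlap$ case uses the $\eta$- and $\epsilon$-laws relating $\cons(a)$ and $\icons(a)$ to produce a $2$-cell which is only canonical after $(-1)$-truncation. Once local coherence of reductions is available, a well-founded induction on list length lifts it to arbitrary reduction sequences, so any two loops at $\unit$ so obtained are merely equal. Combined with the decoding step this yields $\trunc{-1}{p = \refl}$ for every loop $p : \unit = \unit$ in $\FG{A}$, hence $\trunc{0}{\unit = \unit}$ is contractible, and therefore $\trunc{1}{\FG{A}}$ is a set.
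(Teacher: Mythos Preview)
Your combinatorial engine---the one–step reduction on $\LS{A}$, termination by length, and local confluence handled by the two critical–pair shapes you call $\swap$ and $\overlap$---is exactly the machinery the paper uses, and your reduction of the goal to contractibility of $\trunc{0}{\unit =_{\FG A} \unit}$ via induction on $\FG A$ is fine.

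The genuine gap is the step you describe as ``a propositional encode--decode step using lists as codes on $\FG A$''. To run encode--decode you must produce a family $\code : \FG A \to \UU$ (or into $\Set$, or $\Prop$) by the elimination principle of $\FG A$. But $\FG A$ is a \emph{recursive} HIT: by Principle~\ref{principle:hinitial}, an $\FG A$-algebra structure on the target $\UU$ consists of a base type and, for each $a:A$, an auto-equivalence of the whole universe. The ``shift the list by $a$'' move you have in mind is an equivalence between two particular code types, not an auto-equivalence of $\UU$, so there is no evident way to make $\code$ typecheck. A related symptom is that a pair $(\ell,\, \ell \mred \nil)$ produces a path $[\ell]=\unit$ in $\FG A$, not a loop at $\unit$; without a code family you have no $\encode$ taking an arbitrary $p:\unit=\unit$ to anything combinatorial, so ``every loop is merely obtained from such a pair'' is precisely the statement that needs proof, not an input to it.

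The paper's resolution is to reify your combinatorics as a \emph{non-recursive} HIT $\NR A$ whose constructors are $\eta:\LS A\to\NR A$, a path constructor $\tau$ for single reductions, $2$-cells $\swap$ and $\overlap$ for your two critical pairs, and a $1$-truncation. On $\NR A$ one \emph{can} define $\red:\NR A\to\Prop$ by induction, and your confluence argument becomes the proof that $\redIsNeutral$ is weakly constant, whence $\NR A$ is a set (Lemmas~\ref{lem:redIsNeutral-wconst}--\ref{lem:NA-is-set}). The passage back to $\FG A$ is not encode--decode at all but an algebra argument: $\NR A$ carries an $\FG A$-algebra structure and $\trunc{1}{\FG A}$ an $\NR A$-algebra structure, and initiality of $\FG A$ forces $\tproj{1}{-}:\FG A\to\trunc{1}{\FG A}$ to factor through the set $\NR A$, which is what makes $\trunc{1}{\FG A}$ a set.
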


Our strategy to prove this is to define a simple reduction system together with a \emph{non-recursive approximation}, written $\NR A$, to the free $\infty$-group.
These are both based on the usual construction of the free monoid on $A$, that is, $\List(A)$.
The proof of Theorem~\ref{thm:mainthm}, with all the tools and strategies that we need to develop, constitutes the main part of the paper.

The reason why our strategies are not sufficient to provide a full answer to Question~\ref{eq:main-question} is that $\NR A$ is really only an approximation.
Defining $\FG A$ in a non-recursive way (i.e.\ without using some sort of induction that quantifies over elements of $\FG A$ itself)
seems to, as least as far as we can see, correspond exactly to expressing the infinite coherence tower ``directly''.
We would not be surprised if it turned out that Question~\ref{eq:main-question} was in fact independent of ``standard HoTT'' (the type theory developed in the book~\cite{hott-book}), and if the status of Question~\ref{eq:main-question} was related to the status of semisimplicial types.
We will come back to this in the conclusions of the paper.

\paragraph*{Setting}
The type theory that we consider in this paper is the standard homotopy type theory developed in the book~\cite{hott-book}.
This means that we have \emph{univalent universes} $\UU$, \emph{function extensionality}, and \emph{higher inductive types} (HITs).
Regarding notation, we strive to stay close to \cite{hott-book}, with the exception that we write $(a : A) \to B(a)$ instead of $\Pi(a:A).B(a)$.
All performed constructions will preserve the universe level, hence there is no risk at omitting it.
Uncurrying is done implicitly, allowing us to write $f(a,b)$ instead of $f(a)(b)$.

\paragraph*{Outline}
We give the precise definition of the free $\infty$-group $\FG A$ in Section~\ref{sec:freeHigherGroups}, together with
some simple observations.
The statements in this section (apart from the definition of $\FG A$ and its universal property) are not important for the main part of the paper, the proof of Theorem~\ref{thm:mainthm}, which is given in Section~\ref{sec:mainsection}.
As a corollary of the constructions, we get that $\trunc 1 {\FG A}$ coincides with the set-based construction of the free group, and $\FG A$ does under the assumption that Question~\ref{eq:main-question} has a positive answer.
In Section~\ref{sec:conclusions}, we make some concluding remarks and discuss related open problems in homotopy type theory.

\section{Free $\infty$-Groups} \label{sec:freeHigherGroups}

\subsection{Definition and First Properties}

Let us start with an explicit construction of the free higher group as a higher inductive type $\FG A$, since this is the central concept of the paper.
We use a point constructor $\unit : \FG A$ for the neutral element,
and a constructor $\cons : A \to \FG A \to \FG A$ which ``multiplies'' an element of $A$ with any other group element.
We write $\cons_a : \FG A \to \FG A$ instead of $\cons(a)$ since, most of the time, we regard $a$ as a fixed variable.
The trick which completes the definition in an elegant way, due to Paolo Capriotti, is
to add the condition that $\cons_a$ is an equivalence for every $a$.
This cannot be done directly (at least not according to the usual intuitive rules for presentations of higher inductive types),
but it can be ``unfolded'' and expressed via a suitable collection of constructors.
Let us show the concrete definition.
\begin{definition} \label{def:FA}
 The \emph{free $\infty$-group} over a given type $A$ is the following higher inductive type $\FG{A}$:
 \begin{align*}
  & \textbf{data} \; \FG{A} \; \textbf{where} \\
  & \; \unit : \FG{A} \\
  & \; \cons : A \to \FG{A} \to \FG{A} \\
  & \; \icons : A \to \FG{A} \to \FG{A} \\
  & \; \mu_1 : (a : A) \to (x : \FG{A}) \to \icons_a (\cons_a (x)) = x \\
  & \; \mu_2 : (a : A) \to (x : \FG{A}) \to \cons_a (\icons_a (x)) = x \\
  & \; \mu : (a : A) \to (x : \FG{A}) \to \mathsf{ap}_{\cons(a)}(\mu_1(a,x)) = \mu_2(a,\cons_a(x))
 \end{align*}
\end{definition}
At first glance, the above HIT appears complicated and rather unappealing.
Due to the ``unfolding'', the underlying idea that we have discussed above is somewhat hidden.
The constructors $\unit$ and $\cons$ are the standard ones that one would use to define the type of lists over $A$, $\List(A)$, or, in other words, the free monoid over $A$.
The remaining four constructors $(\icons, \mu_1, \mu_2, \mu)$ simply say that, for every $a : A$, the function $\cons_a : \FG{A} \to \FG{A}$ is a \emph{half-adjoint equivalence} as defined in \cite[Chp 4]{hott-book}.
This is the ``unfolding'' mentioned before; note that we could equally well have used other definition of equivalences, such as the ``bi-invertible'' or ``contractible fibre'' constructions.
In any case, this means that we can think of $\FG{A}$ as being fully described as a triple $(\unit, \cons, \iseqconstructor)$, with $\iseqconstructor : (a : A) \to \isequiv(\cons_a)$.

To make use of the type $\FG A$, we need to know an elimination principle for it.
This can be stated as an induction (dependent elimination) principle, which is how it is done in the book~\cite{hott-book}.
More concise, and (we would say) conceptually clearer, is the approach of phrasing it using a universal property, in other words, a recursion (non-dependent) elimination principle with a uniqueness property.
The equivalence between these approaches for inductive types has been discussed by Awodey, Gambino, and Sojakova \cite{awodeyGamSoja_indTypesInHTT}, for some HITs, by Sojakova \cite{DBLP:journals/corr/Sojakova14}, and a restricted version for set-truncated HITs can be found in~\cite{ACDKNF}.
For \emph{concrete} HITs, such as our $\FG A$, it is straightforward to derive the various elimination principles from each other.
We state the universal property using the presentation as a \emph{homotopy-initial algebras}~\cite{awodeyGamSoja_indTypesInHTT}:

\begin{principle} \label{principle:hinitial}
 We say that an \emph{$\FG{A}$-algebra structure} on a type $X$ consists of a point $u : X$, a map $f : A \to X \to X$, and a proof $p : (a: A) \to \isequiv(f(a))$.
 We say that the type of \emph{$\FG{A}$-algebra morphisms} between $(X,u,f,p)$ and $(Y,v,g,q)$ consists of triples $(h,r,s)$, where $h : X \to Y$, $r : f(u) = v$, and $s : h \circ f = g \circ h$.
 Then, the induction principle of $\FG{A}$ is equivalent to saying that $(\FG{A}, \unit, \cons, \iseqconstructor)$ is \emph{homotopy initial}, i.e.\ that for any $(X,u,f,p)$, the type of morphisms from $(\FG{A}, \unit, \cons, \iseqconstructor)$ to $(X,u, f,p)$ is contractible.
\end{principle}
We will come back to $\FG A$-algebras later.

An obvious question is whether $\FG A$ really deserves to be called the \emph{free $\infty$-group} on $A$.
There are two points: first, we need to check that it is a higher group in the sense of Definition~\ref{def:cheating-group-def}, and second, we have to justify the attribute \emph{free}.

For the first point, note that the suspension $\Susp(A + \unitt)$ has an equivalent description which can be obtained by essentially collapsing the point $\south$ with the path given by the unit type:
it is the HIT $\Wedge(A)$ with a single point constructor $\north : \Wedge(A)$ and a family of loops indexed over $A$, as in $\loops : A \to \north = \north$, a \emph{wedge of $A$-many circles}.
Note that $\Wedge(A)$ is automatically connected.
A further side remark is that $\Wedge(\unitt)$ is canonically equivalent to the circle $\Sn 1$.
We then observe:

\begin{lemma} \label{lem:FA-is-group}
 The free $\infty$-group $\FG{A}$ is an $\infty$-group, with $\Wedge(A)$ as its delooping.
 The canonical equivalence $e : \FG A \to (\north =_{\Wedge(A)} \north)$ maps the structure as one would expect, i.e.\ 
 we have $e(\unit) = \refl$ and $e(\cons_a(x)) = \loops(a) \ct e(x)$.
\end{lemma}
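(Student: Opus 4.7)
The plan is to construct $e$ via the homotopy-initiality of $\FG A$ (Principle~\ref{principle:hinitial}) and then to prove it is an equivalence by an encode--decode argument, in the style of the classical computation $(\base =_{\Sn 1} \base) \simeq \mathbb{Z}$. Connectedness of $\Wedge(A)$ is an immediate preliminary: a $\Wedge(A)$-induction shows $\brck{\north = y}$ is inhabited for every $y$, using $\bproj{\refl}$ in the point case and propositionality of $\brck{-}$ for the loop cases.

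For the construction of $e$, I endow the loop type $(\north =_{\Wedge(A)} \north)$ with an $\FG A$-algebra structure: unit $\refl_\north$, multiplication $p \mapsto \loops(a) \ct p$, and half-adjoint-equivalence data coming from the obvious inverse $p \mapsto \loops(a)^{-1} \ct p$. By Principle~\ref{principle:hinitial} there is then a unique $\FG A$-algebra morphism $e : \FG A \to (\north = \north)$, and the two stated equations $e(\unit) = \refl$ and $e(\cons_a(x)) = \loops(a) \ct e(x)$ are exactly the unit-preservation and multiplication-compatibility data that come with being an algebra morphism.

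To prove $e$ is an equivalence, define a type family $\code : \Wedge(A) \to \UU$ by $\Wedge(A)$-recursion with $\code(\north) \defeq \FG A$, and $\mathsf{ap}_{\code}(\loops(a))$ the univalence image of an appropriate self-equivalence of $\FG A$ built from $\cons_a$; then set $\encode(p) \defeq \mathsf{transport}^{\code}(p, \unit)$. The identity $\encode \circ e = \mathsf{id}$ follows by homotopy-initiality a second time: it suffices to exhibit $\encode \circ e$ as an $\FG A$-algebra endomorphism of $\FG A$, whence uniqueness forces it to coincide with $\mathsf{id}$. For $e \circ \encode = \mathsf{id}$, I extend $e$ to a fibrewise decoding $\decode : (y : \Wedge(A)) \to \code(y) \to (\north = y)$ by $\Wedge(A)$-induction, discharging the loop obligations using the equations $\mu_1$ and $\mu_2$, and then conclude by path induction on the decoded path.

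The main obstacle is matching conventions. Iterated transport of $\unit$ along $\loops(a_1) \ct \cdots \ct \loops(a_n)$ composes actions in the opposite order to the ``cons-from-the-left'' convention forced by $e(\cons_a(x)) = \loops(a) \ct e(x)$, so the action of $\code$ on $\loops(a)$ cannot be $\cons_a$ itself; the natural choice is a ``snoc''-style self-equivalence of $\FG A$, constructed via the universal property of $\FG A$ applied to a suitably shifted algebra structure, and accompanied by an auxiliary lemma $e(\mathsf{snoc}_a(y)) = e(y) \ct \loops(a)$ proved by $\FG A$-induction. Once this alignment is in place, verifying that $\encode \circ e$ preserves not only the unit and multiplication but also the is-equivalence witness is where the higher coherence constructor $\mu$ enters, ensuring the half-adjoint structure is genuinely respected.
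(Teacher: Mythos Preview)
Your outline matches the paper's intent: the paper omits the proof entirely, calling it a straightforward generalisation of the computation $\Omega\Sn 1 \simeq \mathbb Z$ via \cite[Lem~8.9.1]{hott-book} (the packaged encode--decode statement) and deferring details to Brunerie's James-construction paper. Your encode--decode argument, including the correct diagnosis of the cons/snoc convention mismatch and the fix via a ``snoc'' self-equivalence built from a shifted $\FG A$-algebra structure, is exactly that generalisation spelled out.

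One correction to your final paragraph: you do \emph{not} need to verify that $\encode \circ e$ preserves the is-equivalence witness, and $\mu$ is not used there. By Principle~\ref{principle:hinitial}, an $\FG A$-algebra morphism is only a triple $(h,r,s)$ recording preservation of the unit and of the multiplication; since $\isequiv$ is a proposition, it carries no separate morphism obligation. The constructor $\mu$ is part of what makes $(\icons,\mu_1,\mu_2,\mu)$ inhabit the proposition $\isequiv(\cons_a)$, but once that proposition holds you never unpack it again in the algebra-morphism check. The place that \emph{does} require a little care is the multiplication square itself: $\encode(e(\cons_a x)) = \cons_a(\encode(e(x)))$ unfolds to asking that $\mathsf{transport}^{\code}(e(x),-)$ commute with $\cons_a$, which you obtain by extending $\cons_a$ to a family of self-maps of $\code(y)$ over all of $\Wedge(A)$ (the loop case holds because your $\mathsf{snoc}_b$, being an algebra morphism, commutes with $\cons_a$ by construction) and reading off naturality.
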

 Note that this statement is completely independent from the rest of the paper.
\begin{proof}
 This is a relatively straightforward generalisation of the proof that the loop space of $\Sn 1$ is equivalent to the type of integers.
 The proof is an application of \cite[Lem 8.9.1]{hott-book} and does not provide much insight, which is why we choose to omit it.
 For a detailed argument, one can easily adapt the proof given by Brunerie (for an only slightly different statement) in~\cite[Sec~6]{DBLP:journals/corr/abs-1710-10307}.
\end{proof}
From the above lemma, we can in particular observe that $\FG \unitt$ is a presentation of the type of integers.

Next, we need to justify why we call $\FG A$ the \emph{free} higher group.
The following presentation of the argument was suggested by Paolo Capriotti.
Let us consider the following diagram:
\begin{equation} \label{eq:adj-diag}
 \begin{tikzpicture}[x=1cm,y=1cm,baseline=(current bounding box.center)]
  \node (A) at (0,0) {$\UU$};
  \node (B) at (2,0) {$\UUp$};
  \node (C) at (4,0) {$\UUp$};
  \node (AB) at (1,0) {$\bot$};
  \node (BC) at (3,0) {$\bot$};
  \draw[->, bend left=30] (A) to node [above] {$+ \unitt$} (B);
  \draw[->, bend left=30] (B) to node [above] {$\Susp$} (C);
  \draw[->, bend left=30] (B) to node [below] {$\proj$} (A);
  \draw[->, bend left=30] (C) to node [below] {$\Omega$} (B);
 \end{tikzpicture}
\end{equation}
Here, $\UUp$ is the universe of pointed types.
The function $(+ \unitt) : \UU \to \UUp$ maps a type $X$ to $(X+\unitt, \inr(\star))$, while projection $\proj$ simply forgets the point.
As in \cite[Chp 6.5]{hott-book}, we regard the suspension as a function $\Susp : \UUp \to \UUp$, mapping $(X,x)$ to $(\Susp(X),\north)$, and $\Omega$ is the loop space.
For $X : \UU$ and $Y,Z : \UUp$, it is easy to see that there is a canonical equivalence
\begin{equation} \label{eq:fst-adj}
 \left(X \to \proj Y\right) \simeq \left((X + \unitt) \to_\bullet Y\right),
\end{equation}
and by \cite[Lem 6.5.4]{hott-book}, we have
\begin{equation} \label{eq:snd-adj}
 \left(Y \to_\bullet \Omega(Z)\right) \simeq \left(\Susp(Y) \to_\bullet Z\right).
\end{equation}
The above diagram \eqref{eq:adj-diag} should for our purpose only be regarded as an illustration of these two equivalences.
Talking about the adjunctions more precisely is difficult since the correct notions would be $\infty$-categorical.
This leads into a territory that is vastly unexplored in homotopy type theory~\cite{capKra_semisegal}, although higher adjunctions can be represented using only a finite amount of data~\cite{Riehl2016802}; here, we do not go further into this.

Let $G$ be a given $\infty$-group, represented by $(Z,z)$.
This means that we have $G \simeq (z=z)$.
We can then calculate:
\begin{equation}
 \begin{alignedat}{3}
  &&&& \quad & \FG A \to_{\infty\mathsf{grp}} G \\
  && \mbox{by Def \ref{def:cheating-group-def} and Lem~\ref{lem:FA-is-group}} \quad &\simeq && (\Wedge(A), \north) \to_\bullet (Z,z) \\
  && \mbox{} &\simeq && \Susp(A + \unitt) \to_\bullet (Z,z) \\
  && \mbox{by \eqref{eq:snd-adj}}  \quad &\simeq && (A + \unitt) \to_\bullet \Omega(Z,z) \\
  && \mbox{by \eqref{eq:fst-adj}}  \quad &\simeq && A \to (z=z) \\
  && \mbox{} &\simeq && A \to G.
 \end{alignedat}
\end{equation}
Thus, $\FGwoArg : \UU \to \infty\mathsf{GRP}$ is ``morally'' left adjoint to the forgetful functor which returns the underlying type of a higher group.

\subsection{On Alternative Constructions} \label{subsec:alternatives}

As a preparation for the development in Section~\ref{sec:mainsection}, and to better understand the difficulties with $\FG A$, let us attempt to construct $\FG A$ in a different way.
Let us write $\signed A$ for $A + A$; we call $\signed A$ the type of \emph{elements of $A$ with a sign}, and we think of $\inl(a)$ as $a$ and $\inr(a)$ as $a^{-1}$.
For $a : \signed A$, we write $\invert a$ for the element we get by changing the sign.
Of course, this means that $\invert{\invert a} = a$.

Elements of the free group $\FG{A}$ are, at least intuitively, lists over $\signed A$.
The difficulty is that different lists may represent the same group element.
This happens, for example, for $[\inl(a), \inr(a)]$ (i.e.\ $a \cdot a^{-1}$) and the empty list, both of which represent the unit of the group.
We can avoid this problem by quotienting to identify the list $[x_0, \ldots, x_k, a, \invert a, x_{k+1}, \ldots, x_n]$ with the list $[x_0, \ldots, x_n]$.
This quotient will be a set by definition of the quotient (set quotient) operation.
If we are happy to work only with sets, and to set-truncate everything, then this is entirely possible,
and in fact, it is a construction of the set-based free group given in \cite[Thm 6.11.7]{hott-book}.
If, like in this paper, we do not want to restrict ourselves to sets, we might think of taking a HIT which has path constructors for each such pair of lists, without set-truncating.
The problem is that we need coherences:
if we use a path constructor to reduce one redex and then a second, we should get the same equality as if we reduce the second redex and then the first.
When looking at \emph{three} redexes, we need to express that these equalities ``fit together'', and so on.
This is an instance of the problem of infinite coherences which seem to be hard and possibly impossible to express in HoTT.
In Section~\ref{sec:mainsection}, we will perform a finite approximation of this construction in order to show Theorem \ref{thm:mainthm}, although we will see that a couple of additional arguments are required to complete the proof.

Alternatively, we could think to only consider lists over $\signed A$ in \emph{normal form}, i.e.\ lists which come together with a proof that they do not contain a redex.
The type of lists over $A$ in normal form is a set (assuming that $A$ is a set), and the presentation is indeed fully coherent.
The trouble is that we are in general unable to define a suitable binary operation on this set, i.e.\ we are lacking a group operation.
If we have two lists in normal form, their concatenation might not be in normal form, and for arbitrary types, we have no way of calculating a normal form or even checking whether we already have a normal form.

Unsurprisingly, the approach with normal forms works if $A$ has decidable equality:
\begin{proposition}
 If $A$ has decidable equality, in the sense that
 \begin{equation}
  (a_1, a_2 : A) \to (a_1 = a_2) + ((a_1 = a_2) \to \emptyt),
 \end{equation}
 then $\FG A$ has decidable equality as well.
 Moreover, $\FG A$ is in this case canonically equivalent to the set-truncated construction of the free group as given in \cite[Chp 6.11]{hott-book}.
\end{proposition}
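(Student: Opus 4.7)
The plan is to construct an explicit normal-form model of the free group that is manifestly a set with decidable equality, and to show that it satisfies the universal property of $\FG A$.

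Define $\LNF A$ to be the type of \emph{reduced} lists over $\signed A$, that is, lists $[x_0,\ldots,x_n]$ with no consecutive pair of the form $(x_i, \invert{x_i})$. Since $A$ has decidable equality, so does $\signed A$, and being reduced is a decidable (hence propositional) predicate on $\List(\signed A)$. Hence $\LNF A$ is a set with decidable equality. Equip it with an $\FG A$-algebra structure by taking the unit to be the empty list $\nil$ and, for each $a : A$, defining $\cons'_a : \LNF A \to \LNF A$ to drop the leading entry whenever the list begins with $\inr(a)$, and otherwise to prepend $\inl(a)$; dually define $\icons'_a$, swapping the roles of $\inl(a)$ and $\inr(a)$. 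A short case distinction---using crucially that the input is reduced, so that the tail after deleting a leading $\inr(a)$ cannot itself begin with $\inl(a)$---shows that $\cons'_a$ and $\icons'_a$ are mutually inverse. Because $\LNF A$ is a set, these two homotopies automatically constitute a half-adjoint equivalence, so the coherence analogue of $\mu$ is free.

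By Principle~\ref{principle:hinitial}, this $\FG A$-algebra structure induces a canonical map $\varphi : \FG A \to \LNF A$. Going back, define $\psi : \LNF A \to \FG A$ by list recursion: $\psi(\nil) \defeq \unit$, $\psi(\inl(a) \append \ell) \defeq \cons_a(\psi(\ell))$, and $\psi(\inr(a) \append \ell) \defeq \icons_a(\psi(\ell))$. The composite $\varphi \circ \psi$ is shown to equal $\mathrm{id}_{\LNF A}$ by straightforward induction on $\LNF A$; since the target is a set, no higher coherences intervene. For $\psi \circ \varphi = \mathrm{id}_{\FG A}$, one verifies that $\psi \circ \varphi$ is an $\FG A$-algebra endomorphism of $(\FG A,\unit,\cons,\iseqconstructor)$---the only nontrivial case being commutation with $\cons_a$, which uses $\mu_2$ when $\varphi(x)$ begins with $\inr(a)$---and then invokes homotopy-initiality (Principle~\ref{principle:hinitial}) to conclude that any such endomorphism must agree with the identity. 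Hence $\FG A \simeq \LNF A$, and this equivalence transports decidable equality back to $\FG A$.

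For the second claim, recall that the set-based free group of \cite[Chp 6.11]{hott-book} is a set quotient of $\List(\signed A)$ by the redex relation; when $A$ has decidable equality, picking normal forms identifies this quotient canonically with $\LNF A$. Composing with $\FG A \simeq \LNF A$ yields the desired canonical equivalence. The main obstacle I expect is the careful verification that the data $(\cons'_a, \icons'_a, \mu'_1, \mu'_2, \mu')$ really assembles into an instance of $\iseqconstructor$ in the form required by Definition~\ref{def:FA}; this is a finite but fiddly case analysis, made routine only because every equation to be checked lives inside the set $\LNF A$.
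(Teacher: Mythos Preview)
Your proposal is correct and follows essentially the same approach as the paper's proof sketch: define $\LNF A$, equip it with an $\FG A$-algebra structure to obtain $\varphi : \FG A \to \LNF A$ by initiality, build the inverse $\psi$ by list recursion sending $\inl(a)$ to $\cons_a$ and $\inr(a)$ to $\icons_a$, and identify $\LNF A$ with the set-quotient construction of \cite[Chp~6.11]{hott-book} via decidable normalisation. Your write-up is in fact more explicit than the paper's---particularly in spelling out the use of $\mu_2$ for the $\cons_a$-compatibility of $\psi \circ \varphi$ and in invoking homotopy-initiality to conclude that this endomorphism is the identity---where the paper simply asserts that the two functions ``give rise to an equivalence''.
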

\begin{proof}[Proof sketch]
 Thanks to \cite[Thm 6.11.7]{hott-book}, we can take set-quotiented lists (as described above) as the definition of the set-truncated free group.
 Using decidable equality of $A$, it is easy to see that this quotient is equivalent to the type of lists in normal form; let us write $\LNF A$ for the latter type.
 An element of $\LNF A$ is a list together with a propositional property, and we have an embedding $\LNF A \to \LS A$.
 What is left to do is to compare $\LNF A$ with $\FG A$.
 Note that $\LNF A$ is a set without being explicitly set-truncated.
 There is a canonical $\FG A$-algebra structure on $\LNF A$, giving rise to a map $\FG A \to \LNF A$.
 Further, one can construct a function $\LS A \to \FG{A}$, by induction on the list.
 The empty list is mapped to $\unit$, $\inl(a)$ translates into an application of $\cons_a$, and $\inr(a)$ becomes $\icons_a$.
 These functions give rise to an equivalence $\LNF A \simeq \FG A$, and since $\LNF A$ has decidable equality, $\FG A$ enjoys the same property.
\end{proof}

\section{The fundamental group of the free $\infty$-group} \label{sec:mainsection}

In this section, the core of the paper, we develop a couple of techniques that, when combined, allow us to prove Theorem~\ref{thm:mainthm}.
For the whole section, let us assume that $A$ is a given set.
Given lists $x,y : \LS A$, we write $xy$ for their \emph{concatenation}, i.e.\ the list we get by simply joining the two lists as in $[a_1,a_2] [a_3, a_4] = [a_1,a_2,a_3,a_4]$.
Since this operation is associative (up to a canonical and fully coherent equality), we omit brackets and write $xyz$ for both $(xy)z$ and $x(yz)$.
Given $a : \signed A$, we regard $a$ as a one-element list and allow ourselves to write e.g.\ $xayz$ or $a\invert a y$.

\subsection{A simple reduction system in type theory}

As discussed in Section~\ref{subsec:alternatives} above,
we can think of elements of $\FG A$ as lists over $\signed A$,
and the main problem is that different lists represent the same group element.
This motivates the development of a system of reductions.
\begin{definition} \label{def:Red}
The type family
 $\Red : \LS A \to \UU$,
which expresses that a list represents the same group element as the empty list (i.e.\ the neutral element of the group $\FG{A}$), is defined as follows.
We first define an auxiliary family $\auxRed : \N \to \LS A \to \UU$ by induction on the natural numbers:
\begin{alignat*}{3}
  & \auxRed_0 (x) &&\defeq \; \; && \length(x) = 0 \\
  & \auxRed_{2+n}(x) && \defeq && \Sigma (a : \signed A), (y,z : \LS A), \\
  &                         &&&& \phantom{\Sigma} \left(\length(y) + \length(z) = n\right), \\
  &                         &&&& \phantom{\Sigma} \left(x = y \concat a \concat \invert a \concat z\right), \\
  &                         &&&& \phantom{\Sigma} \auxRed_n (y \concat z)
\end{alignat*}
 Using this, we set $\Red(x) \defeq \auxRed_{\length(x)}(x)$.
\end{definition}

If we have indexed inductive families in the theory, we can alternatively define $\Red$ directly as such a family generated by
\begin{alignat*}{2}
 & \mathsf{zero} : &\;& \Red(\nil) \\
 & \mathsf{step} : &\;& (y,z : \LS A) \to (a : \signed A) \to \Red(y \concat z) \to \Red(y \concat a \invert a \concat z).
\end{alignat*}
The two definitions are essentially the same, only represented in different ways.
In both cases, given $r : \Red(x)$, 
we say that $r$ witnesses that $x$ can be \emph{reduced} to the empty list and
we call $r$ a \emph{reduction sequence}.
We view it as a sequence consisting of \emph{steps}, each of which removes a single \emph{redex} $a \invert a$.
An example of a reduction sequence $r: \Red(a \invert a b c \invert c \invert b)$ could be pictured as follows, where each step is represented by an arrow $\leadsto$ annotated with the redex it reduces:
\begin{equation} \label{eq:sample-red-sequence}
 a \invert a b c \invert c \invert b \; \stackrel{c \invert c}\leadsto \; a \invert a b \invert b \; \stackrel{a \invert a}\leadsto \; b \invert b \; \stackrel{b \invert b}\leadsto \; \nil.
\end{equation}

\begin{remark}
 There are a couple of points that we want to point out explicitly.
 \begin{enumerate}
  \item In the above example and in the discussions to come, $a : \signed A$ is already positive or negative, which means that every redex is of the form $a \invert a$; the possibility $\invert a a$ is already covered.
  \item The number of steps of $r : \Red(x)$ is simply half of the length of the list $x$, which means that all elements of $\Red(x)$ have the same number of steps.
  In particular, it is easy to prove that $\Red(x)$ is empty if $\length(x)$ is odd.
  \item For a given list $x$, there is no way to \emph{compute} a reduction sequence, since we do not know whether an occurring pair $bc$ forms a redex. 
  A reduction $r: \Red(x)$ encodes equalities which guarantee that all redexes that it reduces are really redexes.
  Deciding whether $bc$ is a redex would require decidable equality on $A$ (but of course, we can always check whether an element of $\signed A$ is positive or negative, and this analysis might give us that $bc$ is definitely not a redex).
  \item For a given $x$, equality on $\Red(x)$ \emph{is} decidable.
  This is because a sequence encodes the positions of the redexes that it reduces, and positions are decidable, while the (in general undecidable) equalities on $\signed A$ are propositions.
  Similarly, if we have $r : \Red(xby)$, we can say in which step $b$ is reduced, since this is encoded by a position.
 \end{enumerate}
\end{remark}

Let us remind ourselves that the goal of the paper is to show that $\FG A$ has trivial fundamental groups.
This is a statement about equalities between equalities.
If we think of a reduction sequence as a proof that a list represents the neutral group element,
i.e.\ as something giving rise to an equality proof, it is hopefully intuitive that we now want to discuss the relationship between different reduction sequences.
In a nutshell, we want to give a criterion which guarantees that two reduction sequences give rise to equal equalities.
To do so, we consider \emph{transformations}:

\begin{definition} \label{def:transformations}
 Let $w : \LS A$ be a list and $r : \Red(w)$ a reduction sequence.
 We consider the following two operations, each of which allows us to create a new reduction sequence in $\Red(w)$ from $r$:
 \begin{enumerate}
  \item Swap two consecutive independent steps in $r$. More precisely, if $r$ is a sequence of the form
   \begin{equation} \label{eq:seq-1-to-transform}
    \ldots \; \leadsto xa \invert a y b \invert b z \; \stackrel{b \invert b}\leadsto \; xa \invert a yz \; \stackrel{a \invert a}\leadsto \; xyz \leadsto \ldots,
   \end{equation}
   we can change it to
   \begin{equation} \label{eq:seq-2-to-transform}
    \ldots \; \leadsto xa \invert a y b \invert b z \; \stackrel{a \invert a}\leadsto \; xyb \invert bz \; \stackrel{b \invert b}\leadsto \; xyz \leadsto \ldots.
   \end{equation}
   Analogously, we can change \eqref{eq:seq-2-to-transform} into \eqref{eq:seq-1-to-transform}.
  \item If a step reduces a redex $a \invert a$ in a list of the form $xa \invert a ay$, we can change this step to remove the redex $\invert a a$ instead, or vice versa.
  This means that
   \begin{equation}
    \ldots \; \leadsto xa \invert a ay \; \stackrel{a \invert a}\leadsto \; xay \; \leadsto \ldots
   \end{equation}
   can be changed to
   \begin{equation}
    \ldots \; \leadsto xa \invert a ay \; \stackrel{\invert aa}\leadsto \; xay \; \leadsto \ldots,
   \end{equation}
   or vice versa.
 \end{enumerate}
 We say that $r : \Red(w)$ can be \emph{transformed} into $s : \Red(w)$ if there is a finite chain of these operations that changes $r$ into $s$.
\end{definition}

After what we said in the paragraph before Definition~\ref{def:transformations}, the best we could hope for is that \emph{any reduction sequence can be transformed into any other reduction sequence} (of the same list $w$).
Indeed, this is what we will show.
We start with a technical lemma which will not only help us to prove what we just said (Corollary~\ref{cor:transform-any}), but also another useful consequence (Corollary~\ref{cor:not-stuck}).

\begin{lemma} \label{lem:normalising-reductions}
 Assume we are given a list of the form $x a \invert a y$, i.e.\ a list in $\signed A$ with an explicitly given redex $a \invert a$.
 Assume further that we have a reduction sequence $s : \Red(x a \invert a y)$.
 It is possible to transform $s$ into a reduction sequence which reduces the redex $a \invert a$ in the first step, i.e.\ starts with $xa \invert ay \; \stackrel{a \invert a}\leadsto \; xy \; \leadsto \; \ldots$.
\end{lemma}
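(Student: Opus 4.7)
The plan is to induct on the number of steps of $s$, equivalently on half the length of the list. If $s$ has a single step, then $x = y = \nil$, that step is forced to reduce the unique redex $a\invert{a}$, and there is nothing to do. Otherwise, I do a case analysis on the first step of $s$, which reduces some pair of consecutive elements at positions $(j, j+1)$; write $(i, i+1)$ for the positions of the distinguished redex, with $i = \length(x) + 1$.

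If the two pairs of positions coincide, we are done immediately. If they are disjoint, the first step reduces an independent redex $b\invert{b}$ and the tail of $s$ is a reduction of a strictly shorter list which still contains $a\invert{a}$ at an appropriately shifted position. By the inductive hypothesis applied to this tail, we can transform it into a sequence that begins by reducing $a\invert{a}$; prepending the original first step then yields a reduction of the original list whose first two steps reduce the disjoint pair $b\invert{b}$ followed by $a\invert{a}$, and a single application of operation~(1) of Definition~\ref{def:transformations} swaps them into the desired order.

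The remaining possibility is that the two pairs of positions overlap in exactly one coordinate. Either $(j, j+1) = (i-1, i)$, which forces the element at position $i-1$ to equal $\invert{a}$ so that the list has the form $x'\invert{a}a\invert{a}y$; setting $b \defeq \invert{a}$ displays the distinguished triple as $b\invert{b}b$ and the first step as the reduction of its left half $b\invert{b}$. Or, symmetrically, $(j, j+1) = (i+1, i+2)$, with the list of the form $xa\invert{a}ay'$ and the first step reducing the right half of the triple $a\invert{a}a$. In both sub-cases, operation~(2) of Definition~\ref{def:transformations} directly converts the first step into one that reduces our distinguished redex $a\invert{a}$, and we are done.

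The main obstacle I expect is administrative bookkeeping rather than any conceptual difficulty. Reduction sequences are given as iterated $\Sigma$-tuples of positions and equality witnesses, so \emph{prepending a step} and \emph{swapping two consecutive steps} ultimately require repacking these tuples while tracking how the natural-number indices in Definition~\ref{def:Red} shift. For a cleaner exposition it is preferable to work with the equivalent inductive-family presentation given immediately after Definition~\ref{def:Red}, where $\mathsf{step}$ is a literal constructor: then the case analysis on the first step of $s$ becomes pattern matching, and the two overlap cases become syntactic instances of the patterns displayed on the left-hand sides of the operations in Definition~\ref{def:transformations}.
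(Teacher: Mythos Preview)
Your argument is correct and uses the same two primitive operations, but the case analysis is organised differently from the paper's. Rather than inducting on the length of $s$ and examining its first step, the paper records the step numbers $m$ and $n$ at which the individual elements $a$ and $\invert a$ of the distinguished redex are eventually removed. If $m=n$ the pair is reduced intact at step $n$; every earlier step is then necessarily independent of it, and $n$ successive applications of operation~(1) bubble that step to the front. If $m\neq n$, the one reduced first must pair with a neighbour equal to the other half of the redex, so a single application of operation~(2) at that step converts to the case $m=n$. Your version instead peels off the head and recurses on the tail: this makes the case split a literal pattern match on the first constructor and performs exactly one swap per recursive call, which (as you note) is closer to a formalisable structural induction on the inductive-family presentation of $\Red$; the paper's version is terser informally because it locates the relevant step globally, but hides an inner induction on $n$ in the phrase ``swapping a further $(n-1)$ times''.
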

\begin{proof}
 Let $x$, $a$, $y$, and $r : \Red(xa \invert ay)$ be given.
 Let us write $m$ for the number of the step in which $a$ is reduced, and $n$ for the number of the step in which $\invert a$ is reduced.
 There are three cases:
 \begin{itemize}
  \item If $m = n$, then the redex $a \invert a$ is reduced in step $n$.
  If $n = 0$, there is nothing to do.
  Otherwise, we can swap this step with step $(n-1)$, since the two steps will be independent of each other.
  Swapping a further $(n-1)$ times, we can move the step reducing $a \invert a$ to the beginning of the sequence.
  \item If $m > n$, then $a \invert a$ are not reduced together, but $\invert a$ is reduced with some $\invert {\invert a}$ to its right instead. Note that $\invert {\invert a} = a$.
  Before step $n$, the list thus has to be of the form $u a \invert a a v$, and step $n$ consists of reducing $\invert a a$.
  We define $r_1$ to be the reduction sequence which is identical to $r$ in every step expect in step $n$ where it reduces $a \invert a$; this is the second of the two possible operations in Definition~\ref{def:transformations}.
  We are now in case one ($m=n$). 
  \item The case $m < n$ is analogous to the case $m > n$. \qedhere
 \end{itemize}
\end{proof}

\begin{corollary} \label{cor:transform-any}
 Any reduction sequence can be transformed into any other reduction sequence.
 More precisely, for $w : \LS A$ and $r,s : \Red(w)$, we can transform $r$ into $s$.
\end{corollary}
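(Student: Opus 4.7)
The plan is to proceed by induction on $\length(w)$, using Lemma~\ref{lem:normalising-reductions} as the main inductive tool. First I would observe that the transformation relation of Definition~\ref{def:transformations} is symmetric (each of the two operations is its own inverse) and transitive (concatenate chains), hence an equivalence relation on $\Red(w)$; the goal is therefore to show that this equivalence relation has a single class.

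For the base case $\length(w)=0$, the type $\Red(\nil)$ contains only the zero-step sequence, so $r$ and $s$ are literally equal and the empty chain of transformations suffices. For the inductive step with $\length(w) > 0$, the reduction $r$ has at least one step; its first step singles out a particular redex $a\invert a$ in $w$, so we may write $w = xa\invert ay$. Applying Lemma~\ref{lem:normalising-reductions} to $s$ together with this explicit redex produces a chain of transformations converting $s$ into some $\tilde s$ whose first step also reduces $a\invert a$. Now $r$ and $\tilde s$ agree on the first step, so stripping it off leaves reduction sequences $r', \tilde s' : \Red(xy)$ of a strictly shorter list. By the induction hypothesis there is a chain transforming $r'$ into $\tilde s'$; since each operation of Definition~\ref{def:transformations} is local, this chain lifts step by step to a chain transforming $r$ into $\tilde s$, which we then compose with the reverse of the chain from $s$ to $\tilde s$ to get the desired transformation from $r$ to $s$.

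The main obstacle I anticipate is the bookkeeping in the lifting step: one has to verify that performing a swap or a re-signing on two steps that lie after a fixed initial segment really is a legal operation on the full sequence, and that Lemma~\ref{lem:normalising-reductions} is being invoked on the specific occurrence of $a\invert a$ that is singled out by the first step of $r$, rather than on some existential witness. Both points are formal but rely on the ability, noted in the remark following Definition~\ref{def:Red}, to track positions inside a reduction sequence decidably. Everything else is routine induction, and the statement follows once these locality and position-tracking issues are pinned down.
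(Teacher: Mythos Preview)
Your proposal is correct and follows essentially the same approach as the paper. The paper's proof is slightly more terse---it transforms $r$ to match the first step of $s$ rather than the other way around---but the inductive structure (use Lemma~\ref{lem:normalising-reductions} to synchronise the first steps, then recurse on the tails) is identical, and your explicit remarks about lifting tail-transformations and tracking the specific occurrence of the redex simply make precise what the paper leaves implicit.
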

\begin{proof}
 A reduction sequence is given by a chain of reduction steps,
 and the number of steps in $r$ and $s$ are equal (both are $\mathsf{length}(w)/2$).
 Thus, it is sufficient to transform $r$ into a sequence which consists of the same steps as $s$.
 By the above lemma, we can transform $r$ into a sequence $r'$ which in the first step reduces whichever redex $s$ reduces in the first step.
 Applying the same argument to the ``tail'' of the sequences (note that $r'$ and $s$, each with the first step removed, still reduce the same list), we get a transformation into a sequence which in every step mirrors the reduction of $s$ and is thus equal to $s$.
\end{proof}

A second easy consequence is that, if a list is reducible, then we cannot ``get stuck'' while reducing:
we can start reducing at an arbitrary position without risking of ending up with an unreducible list.
Note that we write $B \leftrightarrow C$ for $(B \to C) \times (C \to B)$. 
\begin{corollary} \label{cor:not-stuck}
For any lists $y, z$ and $a : \signed A$, we have
\begin{equation}
 \Red(y \concat a \concat \invert a \concat z) \leftrightarrow \Red(y \concat z).
\end{equation}
\end{corollary}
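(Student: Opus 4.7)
The plan is to handle the two implications separately. The direction $\Red(y\concat z) \to \Red(y\concat a\concat \invert a\concat z)$ is direct: given $r : \Red(y\concat z)$, we construct an element of $\Red(y\concat a\concat \invert a\concat z)$ by prepending one reduction step that removes the explicit redex $a\invert a$ sitting between $y$ and $z$. In the inductive presentation this is simply $\mathsf{step}(y, z, a, r)$; in the $\auxRed$-presentation we use $\length(y\concat a\concat \invert a\concat z) = \length(y\concat z) + 2$ and package $r$ together with the obvious splitting data $(a, y, z, \ldots)$.

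For the other direction, given $s : \Red(y\concat a\concat \invert a\concat z)$, I would appeal to Lemma~\ref{lem:normalising-reductions}, applied to the distinguished occurrence of $a\invert a$ between $y$ and $z$. This yields a transformed reduction sequence $s'$ of the same list whose first step reduces that very redex. A reduction sequence is a sequence of individual steps, so stripping off the first step of $s'$ leaves a reduction sequence of the resulting shorter list $y\concat z$, and this is the required element of $\Red(y\concat z)$. In terms of $\auxRed$, this corresponds to unfolding $s'$ at $\auxRed_{2+n}$ and extracting its last $\auxRed_n$-component after checking that the recorded equality identifies the first-step split as the given $y,a,\invert a,z$.

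The main subtlety to watch for is confirming that Lemma~\ref{lem:normalising-reductions} really lets us designate \emph{which} occurrence of $a\invert a$ is reduced first, rather than only producing a sequence whose first step reduces some $a\invert a$ pair elsewhere in the list (for instance if $y$ or $z$ themselves contain an occurrence of $a\invert a$). But as the paper emphasises in the remark after Definition~\ref{def:Red}, steps of a reduction sequence are identified by positions, not by the abstract identity of the redex, and the lemma's input is a list pre-factored as $x\concat a\concat \invert a\concat y$, so the position of the redex to be reduced first is built into the statement. Instantiating the lemma with the split given by the outer $y$ and $z$ of our Corollary therefore produces precisely the first step we need, and the remainder of the proof is bookkeeping.
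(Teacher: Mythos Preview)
Your proposal is correct and matches the paper's proof exactly: the $\leftarrow$ direction prepends a single reduction step, and the $\rightarrow$ direction invokes Lemma~\ref{lem:normalising-reductions} and then drops the first step of the resulting sequence. Your extra discussion of why the lemma targets the designated occurrence is sound and more explicit than the paper, but the argument is the same.
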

\begin{proof}
 The direction $\leftarrow$ is immediate, by adding a single reduction step reducing $a \invert a$.
 The direction $\rightarrow$ is an application of Lemma~\ref{lem:normalising-reductions}.
\end{proof}

\begin{remark}
 Note that Corollary~\ref{cor:transform-any} subtly but crucially depends on the assumption that $A$ is a set, while Lemma~\ref{lem:normalising-reductions}, as formulated, would work for arbitrary types $A$.
 It is true independently of $A$ that a reduction sequence is given by a chain of reduction steps.
 A reduction step encodes the \emph{position} at which the reduction is taking place (say, the length of the list $y$ in Definition~\ref{def:Red}), together with a proof that the reduction is possible (i.e.\ a proof that the pair at the position is actually a redex).
 The second part amounts to an equality in $\signed A$ (since ``$ab$ being a redex'' means $a = \invert b$); thus, it is a proposition if $A$ is a set.
 In this case, a reduction step is determined by the position, and a reduction sequence is determined by the chain of positions which it encodes.
 The proof of Corollary~\ref{cor:transform-any} relies on this.
 
 Lemma~\ref{lem:normalising-reductions} holds even without the requirement of $A$ being a set.
 However, note that the proof of Lemma~\ref{lem:normalising-reductions}, when it uses the second operation in Definition~\ref{def:transformations}, has to construct a new equality (this is hidden in the sentence ``Before step $n$, the list thus has to be of the form $u a \invert a a v$'').
 Therefore, the new sequence constructed in Lemma~\ref{lem:normalising-reductions} \emph{will} reduce $a \invert a$ in the first step, but the proof that $a \invert a$ is indeed a redex could be a nontrivial one.
\end{remark}

\subsection{A non-recursive approximation to the free $\infty$-group}

We are ready to define a \emph{non-recursive approximation} to the free group $\FG{A}$, a HIT that we call $\NR{A}$.
By \emph{non-recursive}, we mean that constructors of $\NR{A}$ do not use points or paths of $\NR{A}$ in their arguments.

\begin{definition} \label{def:NA}
We define $\NR{A}$ to be the HIT with the following constructors:
\begin{equation*}
 \begin{alignedat}{2}
  &\eta : &\; & \LS A \to \NR{A} \\
  &\tau : && (x : \LS A) \to (a : \signed A) \to (y : \LS A) \\
          &&& \; \to \eta(x \concat a \concat \invert a \concat y) = \eta(x \concat y) \\
  &\swap : && (x : \LS A) \to (a : \signed A) \to (y : \LS A) \\
          &&& \; \to (b : \signed A) \to (z : \LS A) \\
          &&& \; \to \tau(x,a,yb \invert b z) \ct \tau(xy,b,z) = \tau(xa \invert a y, b, z) \ct \tau(x,a,yz) \\
  &\overlap : && (x : \LS A) \to (a : \signed A) \to (y : \LS A) \\
          &&& \; \to \tau(x,a,ay) = \tau(xa,\invert a,y) \\
  &\trconst : &&\istype{1}(\NR{A}) 
 \end{alignedat}
\end{equation*}
\end{definition}

We can think of $\NR{A}$ (without the last constructor) as a ``wild'' quotient of $\LS A$.
Recall that we said that lists over $\signed A$ correspond to very intentional representations of group elements.
The HIT with constructors $\eta$ and $\tau$ can be thought of as a ``level $0$ approximation'' to a fully coherent non-recursive quotient of $\LS A$:
we identify some lists which represent the same group element, but the equalities are incoherent.
This is partially remedied by the constructors $\swap$ (``swap'') and $\overlap$ (``overlap''), ensuring that the equalities generated by $\tau$ satisfy basic coherence.
They can be pictured as follows:

 \begin{equation}
  \begin{tikzpicture}[x=4cm,y=-1.5cm,baseline=(current bounding box.center)]
   \node (XAYBZ) at (0,0) {$\eta(xa \invert a y b \invert b z)$};
   \node (XYBZ) at (1,0) {$\eta(xy b \invert b z)$};
   \node (XAYZ) at (0,1) {$\eta(xa \invert a yz)$};
   \node (XYZ) at (1,1) {$\eta(xyz)$};
   \node (SW) at (0.5,0.5) {$\swap(x,a,y,b,z)$};
   
   \draw[->] (XAYBZ) to node [above] {$\tau(x,a,yb\invert b z)$} (XYBZ);
   \draw[->] (XAYBZ) to node [left] {$\tau(xa \invert ay, b, z)$} (XAYZ);
   \draw[->] (XYBZ) to node [right] {$\tau(x,a,yb\invert b z)$} (XYZ);
   \draw[->] (XAYZ) to node [below] {$\tau(xa \invert ay, b, z)$} (XYZ);
  \end{tikzpicture}
 \end{equation}

 \begin{equation}
  \begin{tikzpicture}[x=4cm,y=-2cm,baseline=(current bounding box.center)]
   \node (XAAY) at (0,0) {$\eta(xa \invert a a y)$};
   \node (XAY) at (1,0) {$\eta(x a y)$};
   \node (OV) at (0.5,0) {$\overlap(x,a,y)$};
   
   \draw[->, bend left=20] (XAAY) to node [above] {$\tau(x,a,ay)$} (XAY);
   \draw[->, bend right=20] (XAAY) to node [below] {$\tau(xa, \invert a, y)$} (XAY);
  \end{tikzpicture}
 \end{equation}

$\swap$ and $\overlap$ themselves are not directly guaranteed to be coherent; if we omit the constructor $\trconst$, we can think of $\NR{A}$ as a ``level $1$ approximation''.
$\trconst$ ensures that all higher equalities hold, by forcing $\NR{A}$ to be $1$-truncated.
The statement that $\NR{A}$ is an approximation to the free higher group can then be made by drawing a connection to $\trunc 1 {\FG{A}}$,
which we will do later.

If a list can be reduced, then in $\NR{A}$, it is indistinguishable from the empty list:
\begin{lemma} \label{lem:redIsNeutral}
We have a function
\begin{equation}
 \redIsNeutral : (z : \LS A) \to \Red(z) \to \eta(z) = \eta(\nil).
\end{equation}
\end{lemma}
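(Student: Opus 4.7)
The plan is to define $\redIsNeutral$ by recursion on the reduction sequence. This is most transparent using the indexed-inductive presentation of $\Red$ given right after Definition~\ref{def:Red}, which exposes the two cases $\mathsf{zero}$ and $\mathsf{step}$. In the base case, $r = \mathsf{zero}$ forces $z \equiv \nil$, and we return $\refl_{\eta(\nil)}$. In the inductive case, $r = \mathsf{step}(y, z', a, r')$ with $z \equiv y \concat a \concat \invert a \concat z'$ and $r' : \Red(y \concat z')$; the induction hypothesis supplies $q : \eta(y \concat z') = \eta(\nil)$, and the path constructor $\tau$ supplies $\tau(y, a, z') : \eta(y \concat a \concat \invert a \concat z') = \eta(y \concat z')$, so we return $\tau(y, a, z') \ct q$.

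If instead we use the recursive definition via $\auxRed$ from Definition~\ref{def:Red}, the proof proceeds by induction on the natural-number index $n$ of $\auxRed_n(z)$ rather than on $z$ directly. For $n = 0$, we have $\length(z) = 0$, so $z = \nil$ and we return $\refl$. For $n = 2 + m$, we unpack the $\Sigma$-type to obtain $a : \signed A$, lists $y, z' : \LS A$, an equality $p : \length(y) + \length(z') = m$, an equality $q : z = y \concat a \concat \invert a \concat z'$, and $r' : \auxRed_m(y \concat z')$. Since $\length(y \concat z') = m$ by $p$, the element $r'$ is accepted as an inhabitant of $\Red(y \concat z')$, and the induction hypothesis yields $h : \eta(y \concat z') = \eta(\nil)$. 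We return $\mathsf{ap}_\eta(q) \ct \tau(y, a, z') \ct h$.

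The main obstacle is essentially bookkeeping: one must match the index $m$ against $\length(y \concat z')$ to be able to invoke the induction hypothesis, and transport along $q$ to replace $z$ by $y \concat a \concat \invert a \concat z'$. There is no genuine homotopical content — the lemma only asserts the existence of some equality, so none of the coherence constructors $\swap$, $\overlap$, or $\trconst$ are needed; only $\eta$ and $\tau$ are used. Coherence questions will arise in the next step of the development, when two different reduction sequences must be shown to give rise to equal equalities via Corollary~\ref{cor:transform-any} together with $\swap$ and $\overlap$.
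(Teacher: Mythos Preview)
Your proof is correct and follows essentially the same approach as the paper: induction on the reduction sequence, using $\tau$ for each step and concatenating the resulting equalities. Your account is in fact more explicit than the paper's, spelling out both the indexed-inductive case analysis and the $\auxRed_n$-based version with the necessary transports, whereas the paper gives only a brief informal description and then remarks that, in the indexed-inductive formulation, the induction step is simply $\tau$.
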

\begin{proof}
 We need to analyse the element $r : \Red(z)$.
 It encodes a finite number of reduction steps.
 The first reduction step shows that $z$ is of the form $z = x a \invert a y$, thus the constructor $\tau(x,a,y)$ (transported along the equality $z=xa \invert a y$) provides us with the equality
 $\eta(z) = \eta(xy)$.
 Similarly, each of the remaining reduction steps encoded in $r$ shows how $\tau$ can be applied, and the concatenation of all these equalities yields $\eta(z) = \eta(\nil)$.

 If $\Red$ is defined as an indexed inductive family, $\redIsNeutral(x)(r)$ can be constructed by induction on $r$, and the induction step is given by the constructor $\tau$.
\end{proof}

Not only can we show that reducible lists are equal to $\nil$ in $\NR{A}$, it is also the case that the concrete witness of reducibility does not matter:
\begin{lemma} \label{lem:redIsNeutral-wconst}
 For any given $x$, the function 
 \begin{equation}
  \redIsNeutral(x) : \Red(x) \to \eta(x) = \eta(\nil) 
 \end{equation}
 is weakly constant, in the following sense:
 \begin{equation}
  (r, s : \Red(x)) \to \redIsNeutral(x)(r) = \redIsNeutral(x)(s).
 \end{equation}
\end{lemma}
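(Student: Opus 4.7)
The plan is to leverage Corollary~\ref{cor:transform-any}, which tells us that any two sequences $r,s : \Red(x)$ are connected by a finite chain of the two elementary transformations of Definition~\ref{def:transformations}. Since the $\trconst$ constructor forces $\NR{A}$ to be $1$-truncated, the identity type $\eta(x)=\eta(\nil)$ is a set, so the goal $\redIsNeutral(x)(r) = \redIsNeutral(x)(s)$ is a proposition. Consequently, by induction on the length of the transformation chain, it is enough to prove the statement under the assumption that $r$ and $s$ differ by a single swap or a single overlap change.

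Unfolding the construction of $\redIsNeutral$ from the proof of Lemma~\ref{lem:redIsNeutral}, $\redIsNeutral(x)(r)$ is (up to the equality $x = x_0 a_0 \invert{a_0} y_0$ given by the first step, etc.) the path-composition of $\tau$'s, one for each step of $r$, taken in order. Split this composition at the location of the transformed portion: write it as $p \ct m \ct q$, where $m$ is the contribution of the one or two steps being transformed, and $p$, $q$ are the contributions of the steps strictly before and strictly after. Crucially, the start of $m$ and the end of $m$ are the same for $r$ and $s$ (the transformation only permutes \emph{which} intermediate list appears, not the endpoints of the affected block), so $p$ and $q$ occur unchanged in $\redIsNeutral(x)(s) = p \ct m' \ct q$. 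Hence it suffices to prove $m = m'$.

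For a single swap of two independent consecutive steps, the middle portion of $r$ traverses $\eta(ua\invert avb\invert bw) \to \eta(uvb\invert bw) \to \eta(uvw)$, so $m = \tau(u,a,vb\invert b w) \ct \tau(uv,b,w)$; after the swap we get $m' = \tau(ua\invert a v,b,w) \ct \tau(u,a,vw)$. These are equal by the constructor $\swap(u,a,v,b,w)$. For a single overlap change at a block of the form $ua\invert a a w$, the affected step contributes $m = \tau(u,a,aw)$ in $r$ and $m' = \tau(ua,\invert a, w)$ in $s$, and these are equal by $\overlap(u,a,w)$. In either case, whiskering by $p$ on the left and $q$ on the right yields the required equality $p\ct m\ct q = p\ct m'\ct q$, which is the desired identity.

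The main obstacle is bookkeeping rather than mathematical content: one has to verify that the reduction sequences can indeed be written as a path-composition in a way that isolates precisely the segment affected by the transformation, and that the induction on the transformation chain assembles correctly. The fact that reduction steps are determined by their positions (using the assumption that $A$ is a set, as highlighted in the remark following Corollary~\ref{cor:not-stuck}) is what makes this splitting unambiguous. All remaining higher-dimensional coherence issues are obviated by the $1$-truncation of $\NR{A}$, which makes the target a set.
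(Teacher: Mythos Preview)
Your proof is correct and follows the same approach as the paper's own proof: reduce to a single elementary transformation via Corollary~\ref{cor:transform-any}, then invoke the constructors $\swap$ and $\overlap$ to equate the corresponding compositions of $\tau$'s. The paper states this more tersely, while you have spelled out the splitting $p\ct m\ct q$ and the whiskering explicitly; both arguments are essentially identical.
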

\begin{proof}
 The constructors $\swap$ and $\overlap$ ensure that, if two reduction sequences can be transformed into each other, then they lead to equal proofs of $\eta(x) = \eta(\nil)$.
 More precisely,
 the first operation in Definition~\ref{def:transformations} is exactly covered by the constructor $\swap$, while the second operation is covered by $\overlap$.
 The statement thus follows from Corollary~\ref{cor:transform-any}.
\end{proof}

The point of $\NR{A}$ is that it is easier to reason about $\NR{A}$ than about $\FG{A}$, thanks to the absence of recursive constructors; one can say that $\NR{A}$ attempts to bridge the gap between $\LS A$ and $\FG{A}$.
We first define a property stating that an element of $\NR{A}$ can be reduced.
We write $\Prop$ for $\sm{X:\UU} \isprop(X)$ as usual.
\begin{lemma}
 The family $\brck - \circ \Red : \LS A \to \Prop$ extends to a family $\red : \NR{A} \to \Prop$ as in the following commuting triangle:
 \begin{equation}
  \begin{tikzpicture}[x=2.5cm,y=-1.2cm,baseline=(current bounding box.center)]
   \node (LSA) at (0,0) {$\LS A$};
   \node (TYPE) at (1,0) {$\UU$};
   \node (PROP) at (2,0) {$\Prop$};
   \node (NA) at (0,1) {$\NR{A}$};
   
   \draw[->] (LSA) to node [above] {$\Red$} (TYPE);
   \draw[->] (TYPE) to node [above] {$\brck -$} (PROP);
   \draw[->] (LSA) to node [left] {$\eta$} (NA);
   \draw[->, dotted] (NA) to node [below] {$\red$} (PROP);
  \end{tikzpicture}
 \end{equation}
\end{lemma}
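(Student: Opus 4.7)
The plan is to define $\red$ by recursion on $\NR{A}$, setting $\red(\eta(x)) \defeq \brck{\Red(x)}$, which immediately makes the triangle commute. The remaining constructors must then be handled, but the key observation is that the codomain $\Prop$ is itself a set (by propositional univalence / propositional extensionality), and in particular a $1$-type, so the coherence data required from the higher constructors collapses dramatically.

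Concretely, I would proceed as follows. First, for the path constructor $\tau(x,a,y) : \eta(x a \invert a y) = \eta(xy)$, I need to produce an equality $\brck{\Red(x a \invert a y)} = \brck{\Red(xy)}$ in $\Prop$. Both sides are propositions, so by propositional extensionality it suffices to give a logical equivalence. This is exactly Corollary~\ref{cor:not-stuck}, whose biconditional $\Red(x a \invert a y) \leftrightarrow \Red(xy)$ descends to the propositional truncations. Second, for $\swap$ and $\overlap$, I need to check that two parallel equalities in $\Prop$ coincide; but $\Prop$ is a set, so any two such equalities are automatically equal, and there is nothing to verify. Third, the $\trconst$ constructor, demanding that the image is $1$-truncated, is satisfied because $\Prop$ is a set and therefore a $1$-type.

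The commuting triangle $\red \circ \eta = \brck{-} \circ \Red$ then holds by the definitional computation rule for $\eta$, i.e.\ definitionally (or up to the canonical propositional computation rule one gets from the HIT).

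There is essentially no hard step here: the construction is a direct application of the recursion principle of $\NR{A}$, and the only piece of substantive mathematics used is Corollary~\ref{cor:not-stuck}, which we have already established. If anything, the only subtle point is remembering that $\Prop$ is a set (so that the $\swap$, $\overlap$, and $\trconst$ cases are automatic); without this observation one might worry about producing coherences for $\swap$ and $\overlap$, but since these live in a set, uniqueness of identity proofs one level up makes them trivial.
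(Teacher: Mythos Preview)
Your proposal is correct and follows essentially the same approach as the paper: define $\red$ by recursion on $\NR{A}$, handle $\tau$ via Corollary~\ref{cor:not-stuck} (together with propositional extensionality), and observe that the obligations for $\swap$, $\overlap$, and $\trconst$ are trivial because $\Prop$ is a set. If anything, you are slightly more explicit than the paper, which phrases the triviality of the higher constructors as ``equalities between elements of propositions'' and does not separately mention $\trconst$.
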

\begin{proof}
 We do induction on $\NR{A}$.
 Clearly, we have to set $\red(\eta(x)) \defeq \brck{\Red(x)}$.
 The proof obligation of the constructor $\tau$ is met by Corollary~\ref{cor:not-stuck}.
 The remaining two constructors are trivial, since they ask for equalities between elements of propositions.
\end{proof}

To avoid confusion with elements of $\LS A$, which we call $x,y,z,\ldots$, we use Greek letters for elements of $\NR{A}$.
If $\gamma : \NR{A}$ is reducible, it is equal to the neutral element:
\begin{lemma} \label{lem:NA-red-equal}
There is a function of type
\begin{equation}
 (\gamma : \NR{A}) \to \red(\gamma) \to \gamma = \eta(\nil).
\end{equation}
\end{lemma}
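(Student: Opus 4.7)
The plan is to define the desired function by induction on $\gamma : \NR A$ into the family $P(\gamma) \defeq \red(\gamma) \to (\gamma = \eta(\nil))$. Because $\NR A$ is $1$-truncated by $\trconst$, the path type $\gamma = \eta(\nil)$ is a set, and since $\red(\gamma)$ is a mere proposition, each $P(\gamma)$ is itself a set. This observation trivialises most of the induction: the truncation constructor $\trconst$ is handled because $P$ takes values in $1$-types, and the $2$-path constructors $\swap$ and $\overlap$ require no new data since equalities in a set are automatically unique.

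For the point constructor, given $x : \LS A$, I need a map $f_x : \brck{\Red(x)} \to (\eta(x) = \eta(\nil))$. Lemma~\ref{lem:redIsNeutral} supplies $\redIsNeutral(x)$, and Lemma~\ref{lem:redIsNeutral-wconst} shows this map is weakly constant; applying the standard factorisation of weakly constant functions into sets through the propositional truncation yields the desired $f_x$. For the $\tau$-coherence I must check that $f_{xa\invert ay}$ transported along $\tau(x,a,y)$ coincides with $f_{xy}$. Unfolding the transport in the function-type family $P$, this reduces to showing, for every $r \in \brck{\Red(xy)}$, the equation
$$\tau(x,a,y) \ct f_{xy}(r) \; = \; f_{xa\invert ay}\bigl(\phi(r)\bigr),$$
where $\phi$ denotes the $\leftarrow$-direction of Corollary~\ref{cor:not-stuck} (which is the same map through which $\red$ was forced to act on $\tau$, up to inversion). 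Since the codomain is a set, this equation is an hProp, so I may strip the truncation and prove it for an arbitrary $s \in \Red(xy)$. For the concrete representative $\phi(s) \in \Red(xa\invert ay)$, which by construction begins by reducing the designated $a\invert a$ at position $\length(x)$, the definition of $\redIsNeutral$ produces exactly $\tau(x,a,y) \ct \redIsNeutral(xy)(s)$, matching the left-hand side.

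The main obstacle is the $\tau$-case: carefully unfolding the transport in $P$ and then invoking weak constancy (Lemma~\ref{lem:redIsNeutral-wconst}) to select a representative of $\brck{\Red(xa\invert ay)}$ whose first reduction step is precisely the one contributed by $\tau(x,a,y)$. All remaining cases are trivial given that $P$ is set-valued, so the substantive content of the lemma really lies in packaging $\redIsNeutral$ and its weak constancy into a coherent map out of the $\NR A$-induction.
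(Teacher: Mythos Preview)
Your proposal is correct and follows essentially the same approach as the paper: induction on $\gamma$, factoring $\redIsNeutral$ through the truncation via Lemma~\ref{lem:redIsNeutral-wconst} for the point case, observing that the set-valuedness of $P$ trivialises the higher constructors, and verifying the $\tau$-coherence by unfolding the transport and matching it against the inductive definition of $\redIsNeutral$. The only minor difference is that you track the transport explicitly enough to see that the representative $\phi(s)$ already has the right first step, whereas the paper allows an arbitrary witness $r'$ and then invokes Lemma~\ref{lem:redIsNeutral-wconst} once more to compare it with the extended sequence $r''$; both are fine, and your closing remark about ``invoking weak constancy to select a representative'' is slightly off (that step only uses that $\brck{\Red(xa\invert ay)}$ is a proposition), but this does not affect the argument.
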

\begin{proof}
 We do induction on $\gamma$.
 First, we consider the case $\gamma \equiv \eta(x)$, and we want to find $f_x : \red(\eta(x)) \to \eta(x) = \eta(\nil)$.
 Recall that a weakly constant function into a set (which the codomain here is) factors through the propositional truncation~\cite{lmcs:3217},
 hence since $\red(\eta(x)) \equiv \brck{\Red(x)}$ by definition,
 Lemma~\ref{lem:redIsNeutral-wconst} gives us a function
 \begin{equation}
  f_x : \red(\eta(x)) \to \eta(x) = \eta(\nil)
 \end{equation}
 such that $f_x(\bproj r) = \redIsNeutral(x)(r)$.
 We want to extend this function to $\NR{A}$.
 Induction on $\gamma$ requires us to provide constructions corresponding to $\tau$, $\swap$, and $\overlap$.
 The latter two are contractible, and we do not need to worry about them.
 The proof obligation for $\tau$ says that, for any $y,a,z$, and witnesses $s : \red(\eta(yz))$, $s' : \red(\eta(ya \invert a z))$, the triangle
 \begin{equation}
  \begin{tikzpicture}[x=4.5cm,y=-1.3cm,baseline=(current bounding box.center)]
   \node (T00) at (0,0) {$\eta(y a \invert a z)$};
   \node (T10) at (1,0) {$\eta(yz)$};
   \node (T11) at (1,1) {$\eta(\nil)$};
   
   \draw[->] (T00) to node [below left] {$f_{y a \invert a z} (s')$} (T11);
   \draw[->] (T10) to node [right] {$f_{yz}(s)$} (T11);
   \draw[->] (T00) to node [above] {$\tau(y,a,z)$} (T10);
  \end{tikzpicture}
 \end{equation}
 commutes.
 This is a proposition, thus we can assume that $s$, $s'$ come from actual reduction sequences, i.e.\ we have $r : \Red(yz)$ with $\bproj r = s$ and $r' : \Red(ya \invert a z)$ with $\bproj{r'} = s'$.
 This simplifies the triangle to:
 \begin{equation}
  \begin{tikzpicture}[x=4.5cm,y=-1.3cm,baseline=(current bounding box.center)]
   \node (T00) at (0,0) {$\eta(y a \invert a z)$};
   \node (T10) at (1,0) {$\eta(yz)$};
   \node (T11) at (1,1) {$\eta(\nil)$};
   
   \draw[->] (T00) to node [below left] {$\redIsNeutral(y a \invert a z)(r')$} (T11);
   \draw[->] (T10) to node [right] {$\redIsNeutral(yz)(r)$} (T11);
   \draw[->] (T00) to node [above] {$\tau(y,a,z)$} (T10);
  \end{tikzpicture}
 \end{equation}
 If we write $r'' : \Red(ya \invert a z)$ for the sequence $r$, extended by the single step reducing $a \invert a$ in the beginning,
 we see that composition of the horizontal and vertical arrow give $\redIsNeutral(ya \invert a z)(r'')$.
 Thus, Lemma~\ref{lem:redIsNeutral-wconst} yields the required commutativity of the triangle.
\end{proof}

This allows us to conclude:
\begin{lemma} \label{lem:NA-locally-set}
 $\NR{A}$ is a set locally at $\eta(\nil)$, in the sense that ${\eta(\nil) =_{\NR{A}} \eta(\nil)}$ is contractible.
\end{lemma}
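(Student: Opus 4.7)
The strategy is to reduce to propositionality. Since $\NR{A}$ is $1$-truncated by the constructor $\trconst$, the type $\eta(\nil) = \eta(\nil)$ is automatically a set, and it is inhabited by $\refl$; so contractibility reduces to showing that every loop $p : \eta(\nil) = \eta(\nil)$ satisfies $p = \refl$.

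The key tool is the function $F$ produced by Lemma~\ref{lem:NA-red-equal}, of type $(\gamma : \NR{A}) \to \red(\gamma) \to (\gamma = \eta(\nil))$. Let $r_0 : \Red(\nil)$ denote the trivial, zero-step reduction sequence. Tracing the construction of $F$ back through Lemma~\ref{lem:redIsNeutral} shows that $F(\eta(\nil))(\bproj{r_0}) = \redIsNeutral(\nil)(r_0)$, which in turn is the empty concatenation of $\tau$-steps and hence is $\refl$.

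Now for an arbitrary $p : \eta(\nil) = \eta(\nil)$, I would apply $\mathsf{apd}_F$ at $p$ in the dependent family $P(\gamma) \defeq \red(\gamma) \to (\gamma = \eta(\nil))$, obtaining $\mathsf{transport}^P(p, F(\eta(\nil))) = F(\eta(\nil))$. Evaluating both sides at $\bproj{r_0}$ and unfolding the transport formulas for dependent function spaces and for the family $\gamma \mapsto \gamma = \eta(\nil)$ (where transport along $p$ pre-composes with $p^{-1}$) reduces this to
\begin{equation*}
p^{-1} \ct F(\eta(\nil))\bigl(\mathsf{transport}^{\red}(p^{-1}, \bproj{r_0})\bigr) \;=\; F(\eta(\nil))(\bproj{r_0}).
\end{equation*}
Because $\red(\eta(\nil))$ is a proposition, the transported element equals $\bproj{r_0}$, so both applications of $F$ collapse to $\refl$, and the equation becomes $p^{-1} \ct \refl = \refl$, i.e.\ $p = \refl$.

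The only real obstacle is bookkeeping: getting $p$ versus $p^{-1}$ right in the transport computation, and unfolding the definition of $F$ far enough to certify that the empty reduction yields $\refl$ on the nose. No new mathematical ideas beyond the material already assembled in Lemmas~\ref{lem:redIsNeutral}--\ref{lem:NA-red-equal} should be needed.
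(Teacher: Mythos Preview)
Your argument is correct and is, in essence, the same as the paper's: the paper invokes the local form of ``Rijke's theorem'' (\cite[Thm~7.2.2]{hott-book}) applied to the reflexive mere relation $\red$ together with the map of Lemma~\ref{lem:NA-red-equal}, whereas you unwind that theorem by hand via $\mathsf{apd}_F$ and transport. The mathematical content is identical; the paper just packages the transport calculation into a citation.

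One small simplification: you do not need to know that $F(\eta(\nil))(\bproj{r_0})$ is $\refl$ on the nose. Writing $q \defeq F(\eta(\nil))(\bproj{r_0})$, your computation already yields $p^{-1} \ct q = q$, and cancelling $q$ on the right gives $p^{-1} = \refl$ regardless of what $q$ is. So the ``bookkeeping obstacle'' you flag about unfolding the empty reduction is not actually needed.
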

\begin{proof}
 If equality is implied by a ``reflexive mere relation'', then the type is a set (\cite[Thm~7.2.2.]{hott-book}, sometimes called ``Rijke's theorem'').
 Here, we need the local formulation of this statement as given in \cite{nicolai:thesis},
 together with Lemma~\ref{lem:NA-red-equal}.
\end{proof}

Next, we want to extend this observation and show that $\NR A$ is a set.
In general, if a type $X$ is a set locally at $x_0 : X$ and we have an equivalence $e : X \to X$, then $X$ is also a set locally at $e(x_0)$,
using that $\mathsf{ap}_e$ will be an equivalence.
Therefore, if for every $y: X$ there is an equivalence mapping $x_0$ to $y$ (we can say that such an $X$ is ``homogeneous''),
then $X$ is a set; and in fact, it is enough if for a given $y$ the equivalence merely exists (i.e.\ hidden with a truncation).
This is one motivation for the following technical lemma, where we construct equivalences $\NR A \to \NR A$.
Another motivation is that these equivalences are the main part of an $\FG A$-algebra structure on $\NR A$, but we will come back to this later.

\begin{lemma} \label{lem:NA-has-equiv}
 There is a function $f : \signed A \to \NR A \to \NR A$ such that, for every $c : \signed A$, the map $f_c : \NR A \to \NR A$ is an equivalence with $f_{\invert c}$ as its inverse.
 Further, the construction can be done such that, for every $x : \LS A$, we have $f_c(\eta(x)) \equiv \eta(cx)$.
\end{lemma}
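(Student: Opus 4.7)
The plan is to define $f_c$ by induction on $\NR A$, copying each constructor and simply prepending $c$ to the initial list argument. Concretely, I would set
$f_c(\eta(x)) :\equiv \eta(cx)$,
$f_c(\tau(x,a,y)) :\equiv \tau(cx,a,y)$,
$f_c(\swap(x,a,y,b,z)) :\equiv \swap(cx,a,y,b,z)$, and
$f_c(\overlap(x,a,y)) :\equiv \overlap(cx,a,y)$.
This is well-typed because $c \concat (xa\invert ay) \equiv (cx)a\invert ay$ and similarly for the list arguments appearing in $\swap$ and $\overlap$, so each prepended constructor has exactly the type demanded by the induction. The obligation arising from $\trconst$ is discharged because the codomain $\NR A$ is itself a $1$-type. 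The judgmental equation $f_c(\eta(x)) \equiv \eta(cx)$ required in the statement then holds by construction.

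To see that $f_c$ and $f_{\invert c}$ are mutually inverse, it suffices to build a homotopy $H_c : (\gamma : \NR A) \to f_c(f_{\invert c}(\gamma)) = \gamma$; substituting $\invert c$ for $c$ (and using $\invert{\invert c} = c$) supplies the reversed homotopy, giving a quasi-inverse pair and hence an equivalence. I construct $H_c$ by induction on $\gamma$. On points, observe $f_c(f_{\invert c}(\eta(x))) \equiv \eta(c\invert c x)$, so I set $H_c(\eta(x)) :\equiv \tau(\nil, c, x)$, reducing the prepended redex $c \invert c$.

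The only nontrivial case is the dependent path over $\tau(x,a,y)$. Unfolding the definition of $f$ gives $\mathsf{ap}_{f_c \circ f_{\invert c}}(\tau(x,a,y)) = \tau(c\invert c x, a, y)$, so the induction demands a filler for the square with top $\tau(c\invert c x,a,y)$, bottom $\tau(x,a,y)$, left $H_c(\eta(xa\invert ay)) = \tau(\nil, c, xa\invert ay)$, and right $H_c(\eta(xy)) = \tau(\nil, c, xy)$, i.e.\ the equality
\[
\tau(\nil, c, xa\invert ay) \ct \tau(x,a,y) \;=\; \tau(c\invert c x, a, y) \ct \tau(\nil, c, xy).
\]
This is precisely the coherence that swaps the outer redex $c\invert c$ with the inner redex $a\invert a$, namely the instance $\swap(\nil, c, x, a, y)$. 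The obligations corresponding to $\swap$, $\overlap$, and $\trconst$ ask for equalities between $2$-cells in $\NR A$, which live in a set because $\NR A$ is a $1$-type, so they are mere propositions and automatically satisfied.

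The main obstacle I anticipate is the bookkeeping needed to verify that, after unfolding the judgmental reductions, the boundary of the square arising in the $\tau$-case matches the precise instantiation $\swap(\nil, c, x, a, y)$ up to the (strict) associativity of list concatenation; once this is checked, the remainder of the induction is routine and the symmetry argument above promotes the quasi-inverse to a proper equivalence.
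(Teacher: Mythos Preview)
Your proposal is correct and follows essentially the same argument as the paper: define $f_c$ by recursion on $\NR A$ by prepending $c$ to the list argument of each constructor, then build the homotopy $H_c$ by induction, using $\tau(\nil,c,x)$ for the point case and $\swap(\nil,c,x,a,y)$ for the $\tau$-case, with the higher cases trivial since the goal lives in a set. Your orientation of the square even matches the constructor $\swap$ on the nose; the paper states the same equality with sides swapped, which is of course equivalent.
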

\begin{proof}
 Let $c:A$ be given.
 We need to define $f_c : \NR{A} \to \NR{A}$, i.e.\ for a given $\alpha : \NR{A}$, we need $f_c(\alpha) : \NR{A}$.
 This can be done by recursion on $\alpha$ 
 in the obvious way:
 \begin{itemize}
  \item We set $f_c(\eta(x)) \defeq \eta(cx)$.
  \item Next, we need a witness of $f_c(\eta(xa \invert a y)) = f_c(\eta(xy))$. Slightly abusing notation, we write $f_c(\tau(x,a,y))$ for this\footnote{The more accurate notation might be $\mathsf{ap}_{f_c}(\tau(x,a,y))$.}, and we set $f_c(\tau(x,a,y) \defeq \tau(cx,a,y)$.
  \item Similarly, we set $f_c(\swap(x,a,y,z)) \defeq \swap(cx,a,y,z)$;
  \item and $f_c(\overlap(x,a,y)) \defeq \overlap(cx,a,y)$;
  \item and finally, we have $f_c(\trconst) \defeq \trconst$.
 \end{itemize}
 We need to show that $f_c$ is an inverse of $f_{\invert c}$.
 It is sufficient to show that, for $\alpha : \NR{A}$, we have $f_c(f_{\invert c}(\alpha)) = \alpha$ and $f_{\invert c}(f_c(\alpha)) = \alpha$.
 Let us concentrate on the first of these, as the second is no more than a copy which switches the sign of $c$.
 Note that the goal is an equality in the $1$-type $\NR{A}$ and thus a set.
 Thus, when we do induction on $\alpha$, in order to construct a function $h : (\alpha : \NR{A}) \to f_c(f_{\invert c}(\alpha)) = \alpha$, the proof obligations for $\swap$, $\overlap$, and $\trconst$ are trivial.
 For $\eta$ and $\tau$, the constructions work as follows:
 \begin{itemize}
  \item For $\eta$, we need $h(\eta(x))$ of type $f_c(f_{\invert c}(\eta(x))) = \eta(x)$, which reduces to $\eta(c \invert c x) = \eta(x)$.
  Therefore, we can set $h(\eta(x)) \defeq \tau(\nil, c,x)$.
  \item For $\tau$, we need to construct $h(\tau(x,a,y))$ which shows that $h(\eta(xa \invert a y))$ and $h(\eta(xy))$ are equal as paths over $\tau(x,a,y)$.
  
  After unfolding what this means, we see that the type of $h(\tau(x,a,y))$ is: 
  \begin{equation}
   \mathsf{ap}_{f_{c \invert c}}(\tau(x,a,y)) \ct \tau(\nil,c,xy) = \tau(\nil,c,xa\invert a y) \ct \tau(x,a,y).   
  \end{equation}

  By the given construction of $f_c$ above, this simplifies to
  \begin{equation}
   \tau(c \invert c x,a,y) \ct \tau(\nil,c,xy) = \tau(\nil,c,xa \invert a y) \ct \tau(x,a,y),
  \end{equation}
  which is given by $\swap(\nil,c,x,a,y)$. \qedhere
  \end{itemize}
\end{proof}

From Lemma \ref{lem:NA-has-equiv}, it is very easy to derive an $\FG A$-algebra structure. We will record this later in Corollary \ref{cor:NA-has-FA-structure}.
Before going there, we draw another immediate conclusion:
\begin{lemma} \label{lem:NA-is-set}
 The type $\NR A$ is a set.
\end{lemma}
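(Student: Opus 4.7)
The plan is to follow the homogeneity strategy indicated in the paragraph preceding Lemma~\ref{lem:NA-has-equiv}: combine the local set property at $\eta(\nil)$ from Lemma~\ref{lem:NA-locally-set} with the equivalences produced in Lemma~\ref{lem:NA-has-equiv} to transport contractibility of the loop space around $\NR A$. Recall that a type $X$ is a set exactly when $x = x$ is contractible for every $x : X$, since any $p, q : x = y$ satisfy $p \ct q^{-1} = \refl$.

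First I would prove an auxiliary homogeneity statement: for every $\gamma : \NR A$, the predicate
\begin{equation*}
  P(\gamma) \defeq \brck{\sm{e : \NR A \simeq \NR A} e(\eta(\nil)) = \gamma}
\end{equation*}
holds. This is a family of propositions on $\NR A$, so an induction on $\gamma$ has trivial obligations for $\tau$, $\swap$, $\overlap$, and $\trconst$. Only the point constructor case, $\gamma \equiv \eta(x)$ for $x : \LS A$, requires work, and this I would handle by a secondary induction on $x$. For $x \equiv \nil$ take the identity equivalence. For $x \equiv c :: x'$ with $c : \signed A$, assume inductively that we have $e' : \NR A \simeq \NR A$ with $e'(\eta(\nil)) = \eta(x')$; then $f_c \circ e'$ is an equivalence by Lemma~\ref{lem:NA-has-equiv}, and, using the definitional equation $f_c(\eta(x')) \equiv \eta(c :: x')$, it sends $\eta(\nil)$ to $\eta(x)$ as required.

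To finish, let $\gamma : \NR A$ be arbitrary. The statement ``$\gamma = \gamma$ is contractible'' is a proposition, so from $P(\gamma)$ I may extract an actual equivalence $e : \NR A \simeq \NR A$ with $q : e(\eta(\nil)) = \gamma$. Applying $\mathsf{ap}_e$ yields an equivalence $(\eta(\nil) = \eta(\nil)) \simeq (e(\eta(\nil)) = e(\eta(\nil)))$, which in turn is equivalent to $(\gamma = \gamma)$ by conjugating with $q$. By Lemma~\ref{lem:NA-locally-set} the source is contractible, hence so is $\gamma = \gamma$. Since this holds for every $\gamma$, the type $\NR A$ is a set.

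I do not expect any serious obstacle; the path-constructor cases of the induction are absorbed by the propositionality of $P$, and the rest is a direct assembly of Lemmas~\ref{lem:NA-locally-set} and~\ref{lem:NA-has-equiv}. The only thing to be careful about is the appeal to the definitional computation rule $f_c(\eta(x')) \equiv \eta(c :: x')$, which is what makes the inductive step on the list go through without additional coherence worries.
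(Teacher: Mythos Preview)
Your proof is correct and follows essentially the same approach as the paper: reduce to showing that $\gamma = \gamma$ is contractible, use that this goal is a proposition to make the higher-constructor cases of the $\NR A$-induction trivial, and handle the $\eta(x)$ case by list induction using the equivalences $f_c$ from Lemma~\ref{lem:NA-has-equiv} together with Lemma~\ref{lem:NA-locally-set}. The only difference is packaging: you first isolate a truncated homogeneity predicate $P(\gamma)$ and then transport contractibility along the extracted equivalence, whereas the paper skips this intermediate step and directly proves contractibility of $\eta(x)=\eta(x)$ by list induction (applying $\mathsf{ap}_{f_a}$ at each step); the truncation in your $P$ is harmless but also unnecessary, since in the list-induction step you are in fact constructing an untruncated witness.
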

\begin{proof}
 It suffices to show the that, for any given $\alpha : \NR A$, the type ${\alpha =_{\NR A} \alpha}$ is contractible.
 We do induction on $\alpha$.
 Since the goal is a proposition which becomes trivial for all higher constructors, we only need to show the statement for the point constructor $\eta$.
 Thus, assuming $x : \LS A$, we need to show that $\eta(x) =_{\NR A} \eta(x)$ is contractible.
 We do induction again, this time on the list $x$. If $x$ is the empty list $\nil$, then the statement is given by Lemma~\ref{lem:NA-locally-set}.
 Otherwise, $x$ is $ay$ with $a : \signed A$.
 Consider the equivalence $f(a) : \NR A \to \NR A$ from Lemma~\ref{lem:NA-has-equiv}.
 It gives us an equivalence $\mathsf{ap}_{f(a)} : \eta(y) = \eta(y) \to \eta(ay) = \eta(ay)$, the domain of which is contractible by the induction hypothesis.
\end{proof}

\subsection{Connection between approximations of the free group}

In order to make use of $\NR{A}$ and the results we have found so far, we show in this section that $\trunc 1 {\FG{A}}$ is 
equivalent to $\NR A$.
A direct proof via ``maps in both directions which are inverse to each other'' would in principle be possible.
Our calculations however led to a very messy argument, which did not provide much insight.
In this paper, we therefore proceed a bit differently:
after constructing $\FG A$- and $\NR A$-algebra structures on both $\NR A$ and $\trunc 1 {\FG A}$ (which corresponds to constructing the two functions), we show that the structures are ``compatible'', i.e.\ that a certain $\NR A$-algebra map is also an $\FG A$-algebra morphism.
We will later explain in detail what this means.

Recall from the statement of Principle~\ref{principle:hinitial} that an \emph{$\FG{A}$-algebra structure} on a type $X$ consists of a point $u : X$ and a family $f : A \to X \to X$ such that each $f_a$ is an equivalence on $X$, witnessed by some $p: (a : A) \to \mathsf{isequiv}(f_a)$.
An $\FG A$-algebra is a type $X$ with such a structure, i.e.\ a tuple $(X,u,f,p)$.
Also recall that an $\FG A$-algebra morphism between $(X,u,f,p)$ and $(Y, v, g, q)$ is a triple $(h, r , s)$, where $h : X \to Y$, $r : f (u) = v$, and $s : h \circ f = g \circ h$.
Similarly, we say that a type $Y$ carries an $\NR A$-algebra structure if we have a tuple $(e,t,s,o,h)$ mirroring the constructors of $\NR A$,
with $\NR A$-algebra morphisms defined in the obvious way.
Then, $(\NR A, \eta, \tau, \swap, \overlap, \trconst)$ is homotopy initial among all $\NR A$-algebras.

From Lemma~\ref{lem:NA-has-equiv}, we immediately get a canonical $\FG A$-algebra.
Note that here and later we write $\_$ (blank) for a ``nameless'' component which should be clear from the context.
\begin{corollary} \label{cor:NA-has-FA-structure}
 We have an $\FG A$-algebra $(\NR A, \eta(\nil), \overline{f}, \_)$, where $\overline f$ is given by the function Lemma~\ref{lem:NA-has-equiv} composed with the embedding $\inl : A \to \signed A$ of $a$ into ``positively signed $a$''.
 Since $\FG A$ carries the initial such structure, we get a canonical map $\FG A \to \NR A$.
\end{corollary}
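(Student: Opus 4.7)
The plan is to package Lemma~\ref{lem:NA-has-equiv} together with the initiality principle from Principle~\ref{principle:hinitial}. First I would define $\overline{f} : A \to \NR A \to \NR A$ by $\overline{f}_a(\alpha) \defeq f_{\inl(a)}(\alpha)$, where $f$ is the function produced by Lemma~\ref{lem:NA-has-equiv}. To check that each $\overline{f}_a$ is an equivalence, I would observe that $\invert{\inl(a)} = \inr(a)$ holds by the definition of the sign-flip on $\signed A$, so Lemma~\ref{lem:NA-has-equiv} directly provides $f_{\inr(a)}$ as a two-sided inverse of $\overline{f}_a$. This is enough to witness $\isequiv(\overline{f}_a)$ via the bi-invertible (or quasi-invertible) characterisation of equivalences.

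Next I would record the resulting tuple $(\NR A, \eta(\nil), \overline{f}, p)$, where $p : (a : A) \to \isequiv(\overline{f}_a)$ is the equivalence proof just produced, as an $\FG A$-algebra in the sense of Principle~\ref{principle:hinitial}. Applying homotopy-initiality of $(\FG A, \unit, \cons, \iseqconstructor)$ to this algebra yields a contractible type of $\FG A$-algebra morphisms into $(\NR A, \eta(\nil), \overline{f}, p)$. Projecting the underlying function of a chosen (and essentially unique) such morphism gives the desired canonical map $\FG A \to \NR A$. By construction this map sends $\unit$ to $\eta(\nil)$ and commutes with the multiplication constructors in the expected way, i.e.\ up to the equality component of the morphism it sends $\cons_a(\alpha)$ to $\overline{f}_a$ applied to the image of $\alpha$, which on point-constructor inputs unfolds to $\eta(\inl(a) \concat x)$.

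The main potential obstacle is cosmetic rather than substantive: the $\FG A$-algebra signature, strictly speaking, demands half-adjoint equivalences (matching the presentation of $\FG A$ in Definition~\ref{def:FA}), whereas Lemma~\ref{lem:NA-has-equiv} supplies only the two homotopies of a two-sided inverse. However, since $\NR A$ is $1$-truncated by construction (and in fact a set by Lemma~\ref{lem:NA-is-set}), the extra coherence $\mu$ between the two homotopies lives in a contractible type, so the upgrade from quasi-inverse to half-adjoint equivalence is automatic and poses no real difficulty. Beyond this, everything in the corollary is a direct repackaging, so I would expect the proof to be just a few lines once Lemma~\ref{lem:NA-has-equiv} is in hand.
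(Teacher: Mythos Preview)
Your proposal is correct and matches the paper's approach exactly; indeed, the paper treats this corollary as immediate from Lemma~\ref{lem:NA-has-equiv} and gives no separate proof. Your concern about half-adjoint equivalences is slightly overcautious: Principle~\ref{principle:hinitial} phrases the algebra structure in terms of $\isequiv$, which is a mere proposition, so the quasi-inverse data supplied by Lemma~\ref{lem:NA-has-equiv} already suffices without any appeal to the truncation level of $\NR A$.
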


It does not seem to be the case in general that truncations preserve algebra structure, since this seems to require a choice principle; see e.g.\ the infinitary branching trees in~\cite{alt-kap:tt-in-tt,alt-dan-kra:partiality}.
Fortunately, it is very simple in our case:
\begin{lemma}\label{lem:trFA-has-FA-str}
 The type $\trunc{1}{\FG A}$ carries an $\FG A$-algebra structure, and $\bproj - : \FG A \to \trunc{1}{\FG A}$ is an $\FG A$-algebra morphism.
\end{lemma}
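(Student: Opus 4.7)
The plan is to transport the $\FG A$-algebra structure on $\FG A$ itself along the truncation map $\bproj - : \FG A \to \trunc{1}{\FG A}$. I take the unit to be $\bproj{\unit}$. For each $a : A$, I define $f_a : \trunc{1}{\FG A} \to \trunc{1}{\FG A}$ by applying the functoriality of $\trunc 1 -$ to $\cons_a$, i.e.\ by recursion on the truncation (valid since the codomain is a $1$-type) with clause $f_a(\bproj x) \defeq \bproj{\cons_a(x)}$. In the same way, I define a candidate inverse $g_a$ by lifting $\icons_a$, so that $g_a(\bproj x) \equiv \bproj{\icons_a(x)}$.

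To show each $f_a$ is an equivalence, I exhibit $g_a$ as a two-sided inverse. The required homotopies $g_a \circ f_a \sim \mathrm{id}$ and $f_a \circ g_a \sim \mathrm{id}$ are equalities between maps landing in a $1$-type, so by truncation induction it suffices to provide them on points of the form $\bproj x$. There the task reduces to producing $\bproj{\icons_a(\cons_a(x))} = \bproj x$ and $\bproj{\cons_a(\icons_a(x))} = \bproj x$, which are obtained by applying $\mathsf{ap}_{\bproj -}$ to the constructors $\mu_1(a,x)$ and $\mu_2(a,x)$ of $\FG A$. Crucially, the higher coherence $\mu$ is not needed at this stage: since $\trunc{1}{\FG A}$ is $1$-truncated, every notion of ``is an equivalence'' (bi-invertible, half-adjoint, contractible fibres) becomes a proposition-valued, automatically coherent structure, so any choice of inverse up to pointwise equality suffices.

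For the second part of the lemma, I equip $\bproj - : \FG A \to \trunc{1}{\FG A}$ with the structure of an $\FG A$-algebra morphism $(h,r,s)$ by taking $h \defeq \bproj -$, $r \defeq \refl_{\bproj{\unit}}$, and letting $s_a : \bproj{\cons_a(x)} = f_a(\bproj x)$ be the computation rule of the truncation recursor used to define $f_a$ (which in the usual formulation holds definitionally, and in any case propositionally). This is exactly the naturality square that the morphism component $s$ requires.

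The main (minor) obstacle is simply being disciplined about which notion of equivalence one uses, so that the data produced above actually fit the signature of $\isequiv$ chosen in Definition~\ref{def:FA}; but because the truncation forces everything into a $1$-type, no infinite tower of coherences intervenes, and the proof is essentially an application of the functoriality of $\trunc{1}{-}$ together with $\mu_1$ and $\mu_2$.
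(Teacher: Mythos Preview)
Your argument is correct and follows the same idea the paper gestures at in its one-line proof (``$\bproj -$ preserves equivalences''): you simply unfold what that means by functorially lifting $\cons_a$ and $\icons_a$ along $\trunc 1 -$ and using $\mu_1,\mu_2$ to get the two homotopies. One small remark: the fact that $\isequiv$ is a mere proposition holds for \emph{any} type, not just $1$-types, so your justification for omitting $\mu$ is really just the standard observation that a two-sided quasi-inverse suffices to inhabit $\isequiv$; the $1$-truncatedness is only needed to license the truncation induction you use to produce the homotopies.
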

\begin{proof}
 This follows easily from the fact that $\bproj -$ preserves equivalences.
\end{proof}

Corollary \ref{cor:NA-has-FA-structure} can be reversed if we add a truncation:
\begin{lemma} \label{lem:FG-has-NR-structure} 
 The type $\trunc{1}{\FG A}$ carries an $\NR A$-algebra structure.
\end{lemma}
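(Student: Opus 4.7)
The plan is to build an $\NR A$-algebra on $Y \defeq \trunc 1 {\FG A}$ by equipping $Y$ with the five pieces of data listed in Definition~\ref{def:NA}. The constructor $\trconst$ is free, since $Y$ is $1$-truncated by construction. By Lemma~\ref{lem:trFA-has-FA-str}, $Y$ inherits an $\FG A$-algebra structure, giving us $u \defeq \bproj\unit$ and, for each $a : A$, an equivalence $\cons'_a : Y \to Y$ lifted from $\cons_a$; exhibiting this equivalence as half-adjoint yields a quasi-inverse $\icons'_a$ together with coherences $\mu'_1, \mu'_2, \mu'$ of the same shape as the constructors of $\FG A$. For a signed letter $c : \signed A$ we write $c \cdot \alpha$ for $\cons'_a(\alpha)$ when $c = \inl(a)$ and for $\icons'_a(\alpha)$ when $c = \inr(a)$, so as to handle both signs uniformly.

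The point map $e : \LS A \to Y$ mirroring $\eta$ is defined by list recursion: $e(\nil) \defeq u$ and $e(c x) \defeq c \cdot e(x)$, following the section's convention of viewing a letter $c$ as a singleton list. A short induction on $x$ gives $e(x \concat y) = \pi(x)(e(y))$ definitionally, where $\pi(x) : Y \to Y$ is the composite action of the letters of $x$. For the $\tau$-component, both endpoints of the required path $e(x a \invert a y) = e(x y)$ unfold to $\pi(x)$ applied to $a \cdot (\invert a \cdot e(y))$ and to $e(y)$ respectively, so the path is provided by $\mathsf{ap}_{\pi(x)}(\mu'_2(a, e(y)))$ when $a$ is positive and by $\mathsf{ap}_{\pi(x)}(\mu'_1(a, e(y)))$ when $a$ is negative. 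For the $\swap$-component, the two composite paths obtained by reducing two disjoint redexes in either order coincide by the standard naturality identity for $\mathsf{ap}$ applied to concatenations of paths---pure path calculus, with no further input from the HIT structures.

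The only non-routine obligation is the one arising from $\overlap$. For positive $a = \inl(a_0)$, after factoring out $\mathsf{ap}_{\pi(x)}$ from both sides, the two translated paths share the endpoints $\cons'_{a_0}(\icons'_{a_0}(\cons'_{a_0}(e(y))))$ and $\cons'_{a_0}(e(y))$: the $\tau(x, a, a y)$-side reduces to $\mu'_2(a_0, \cons'_{a_0}(e(y)))$, whereas the $\tau(x a, \invert a, y)$-side reduces to $\mathsf{ap}_{\cons'_{a_0}}(\mu'_1(a_0, e(y)))$. These are equated by $\mu'(a_0, e(y))$, i.e.\ by precisely the half-adjoint coherence that the constructor $\mu$ of Definition~\ref{def:FA} was put in to provide; the negative case $a = \inr(a_0)$ is symmetric. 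This is the main step---without the coherence $\mu$ the $\overlap$-obligation could not be discharged---and all remaining higher-dimensional coherences are automatic because every target type lives in the $1$-type $Y$.
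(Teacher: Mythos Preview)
Your proof is correct and follows essentially the same approach as the paper: the paper likewise defines $e$ by composing $\bproj{\cons}$ and $\bproj{\icons}$ along a list, takes $t(x,a,y) \defeq \mathsf{ap}_{\bproj{\vec{\cons(x)}}}(\bproj{\mu_2}(a,y))$, and summarises the $\swap$ and $\overlap$ obligations as ``naturality of whiskering and $\mu$, respectively''. Your write-up supplies the details the paper omits, in particular the explicit identification of the $\overlap$ obligation with the half-adjoint triangle identity; note that the ``symmetric'' negative case needs the \emph{other} triangle identity $\mathsf{ap}_{\icons'_{a_0}}(\mu'_2) = \mu'_1$, which is not literally $\mu'$ but is standardly derivable from a half-adjoint equivalence.
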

\begin{proof}
 Doing this in detail is tedious, but there is no hidden difficulty.
 The components corresponding to the constructors $(\eta, \tau, \swap, \overlap)$ could all be constructed using $\FG A$ directly, we simply need to throw in $\bproj - : \FG A \to \trunc{1}{\FG A}$ at the right places.
 The component corresponding to $\eta$, which has type
 \begin{equation}
  e : \LS A \to \trunc{1}{\FG A}, 
 \end{equation}
 is simply given by composing instances of $\bproj \cons$ or $\bproj \icons$ with each other, one for each element of the list $x$ (we use $\bproj \cons$ for positive list elements and $\bproj \icons$ for negative ones), and applying them on the unit element $\bproj \unit$.
 We write $e(x) \defeq \bproj{\vec {\cons_x}}$ for this.
 For example, if $x$ is the list $a\invert bc$ (where $a$, $b$, $c$ are now all assumed to be positive), 
 then $e(x) \equiv \bproj{\vec{\cons_x}} \equiv \bproj{\cons_a}(\bproj{\icons_b}(\bproj{\cons_c}(\bproj \unit)))$.

 The component corresponding to $\tau$, which has type
 \begin{equation}
  t : (x : \LS A) \to (a : \signed A) \to (y : \LS A) \to e(xa \invert a y) = e(xy),
 \end{equation}
 is then given by ``whiskering'' as in (let us for simplicity assume that $a$ is positive):
 \begin{equation}
  t(x,a,y) \defeq \mathsf{ap}_{\bproj{\vec{\cons(x)}}}(\bproj{\mu_2}(a,y))
 \end{equation}
 The components for $\swap$ and $\overlap$ are essentially naturality of whiskering and $\mu$, respectively, while the fact that we have $1$-truncated $\FG A$ gives us the component for the constructor $\trconst$.
\end{proof}

Using that $\FG A$ carries the (homotopy) initial $\FG A$-algebra structure, and $\NR A$ the (homotopy) initial $\NR A$-algebra structure,
the statements of Corollary~\ref{cor:NA-has-FA-structure} and Lemma \ref{lem:FG-has-NR-structure} give us maps $h$ and $k$ as follows:

\begin{equation} \label{eq:two-maps}
 \begin{tikzpicture}[x= 1.15cm , y = -.5cm,baseline=(current bounding box.center)]
   \node[] (A) at (0,0) {$\FG A$};
   \node[] (B) at (3,0) {$\NR A$};
   \node[] (C) at (6,0) {$\trunc{1}{\FG A}$};
   \draw[->] (A) to node[above] {map of $\FG A$-algs} (B);
   \draw[->] (B) to node[above] {map of $\NR A$-algs} (C);
   \node[] (L1) at (1.5,0.5) {$h$};
   \node[] (L2) at (4.5,0.5) {$k$};
 \end{tikzpicture}
\end{equation}

In the next lemma, we show that \emph{both} these functions are maps of $\FG A$-algebras.
This will be sufficient to show that $\NR A$ is a retract of $\trunc{1}{\FG A}$.
It was a suggestion by Paolo Capriotti that this lemma might lead to a cleaner proof of the property we ultimately want, which, we think, is indeed the case.
\begin{lemma} \label{lem:map-is-double-alg}
 The map $k$ in \eqref{eq:two-maps} is a map of $\FG A$-algebras, with respect to the $\FG A$-algebra structures constructed in Corollary \ref{cor:NA-has-FA-structure} and Lemma \ref{lem:trFA-has-FA-str}.
\end{lemma}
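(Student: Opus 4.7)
The plan is to unfold the definition of ``$\FG A$-algebra morphism'' and verify each component. I need (i) an equality $r : k(\eta(\nil)) = \bproj \unit$ and (ii) a homotopy $s$ providing, for each $a : A$ and $\alpha : \NR A$, an equality $k(\overline f_a(\alpha)) = \bproj{\cons_a}(k(\alpha))$, naturally in $\alpha$.

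Component (i) is immediate: by Principle~\ref{principle:hinitial}, $k$ is the unique $\NR A$-algebra morphism from $\NR A$ to $\trunc 1 {\FG A}$, so its behaviour on $\eta$ is forced to coincide with the component $e$ from Lemma~\ref{lem:FG-has-NR-structure}. In particular $k(\eta(\nil)) \equiv e(\nil) \equiv \bproj \unit$, and reflexivity works.

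For component (ii), I fix $a : A$ and induct on $\alpha : \NR A$ into the family $\alpha \mapsto \bigl(k(\overline f_a(\alpha)) = \bproj{\cons_a}(k(\alpha))\bigr)$. Since $\trunc 1 {\FG A}$ is a $1$-type, each fibre is a set, so the proof obligations attached to the $2$-cells $\swap$ and $\overlap$ are equalities in propositions, and the one attached to $\trconst$ lives in a contractible type; all three are automatic. In the $\eta(x)$ case, Lemma~\ref{lem:NA-has-equiv} gives $\overline f_a(\eta(x)) \equiv \eta(\inl(a) x)$, and both sides of the goal then reduce to $e(\inl(a) x) \equiv \bproj{\cons_a}(e(x))$, so reflexivity again suffices. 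In the $\tau(x, b, y)$ case, both endpoints collapse to reflexivity, so the path-over reduces to the equality of ordinary paths $\mathsf{ap}_{k \circ \overline f_a}(\tau(x,b,y)) = \mathsf{ap}_{\bproj{\cons_a} \circ k}(\tau(x,b,y))$. Using Lemma~\ref{lem:NA-has-equiv}, the left-hand side rewrites as $\mathsf{ap}_k(\tau(\inl(a) x, b, y))$, and by the explicit formula for the $\tau$-component in Lemma~\ref{lem:FG-has-NR-structure} this is $\mathsf{ap}_{\bproj{\vec{\cons(\inl(a) x)}}}(\bproj{\mu_i}(b, y))$ for the appropriate $i \in \{1,2\}$ depending on the sign of $b$; the right-hand side is $\mathsf{ap}_{\bproj{\cons_a}}\bigl(\mathsf{ap}_{\bproj{\vec{\cons(x)}}}(\bproj{\mu_i}(b, y))\bigr)$, and the two are identified by functoriality of $\mathsf{ap}$ applied to the composition $\bproj{\cons_a} \circ \bproj{\vec{\cons(x)}} = \bproj{\vec{\cons(\inl(a) x)}}$.

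The main obstacle is the $\tau$ case: spelling out how the path-over along $\tau(x, b, y)$ unfolds with endpoints that are (definitionally) reflexivity, and then matching the two explicit formulas. Both sides are built from the same ingredients---a $\mu_i$-path in $\FG A$, truncated and whiskered by a chain of $\cons$/$\icons$ operations---so the identification is purely an exercise in functoriality of $\mathsf{ap}$ combined with the strict behaviour of $\overline f_a$ on $\tau$, as given in the proof of Lemma~\ref{lem:NA-has-equiv}. No genuinely new ideas are needed beyond careful bookkeeping of the definitions.
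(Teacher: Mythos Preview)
Your proposal is correct and follows essentially the same route as the paper: verify the point component by a definitional computation, then establish the square $k \circ \overline f_a = \bproj{\cons_a} \circ k$ by induction on $\alpha : \NR A$, using the truncation level to dismiss $\swap$, $\overlap$, $\trconst$ and handling $\eta$ by reflexivity and $\tau$ by functoriality of $\mathsf{ap}$. You are in fact slightly more careful than the paper in noting the $\mu_1$/$\mu_2$ case split on the sign of $b$ and in spelling out why the path-over along $\tau$ degenerates to an ordinary equality of paths.
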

\begin{proof}
 We need to show that the points and the equivalences are preserved, independently of each other.
 The point in $\NR A$ is $\eta(\nil)$, which is mapped to the $\bproj{\unit}$ as required.
 For the equivalence, we only need to check that the underlying functions match accordingly.
 This corresponds to showing commutativity of the following square, for any given $c : A$:
\begin{equation} \label{eq:f-maps}
 \begin{tikzpicture}[x= 1cm , y = -0.4cm,baseline=(current bounding box.center)]
   \node[] (A) at (0,0) {$\NR A$};
   \node[] (B) at (4,0) {$\NR A$};
   \node[] (C) at (0,4) {$\trunc{1}{\FG A}$};
   \node[] (D) at (4,4) {$\trunc{1}{\FG A}$};
   \draw[->] (A) to node[above] {$f_c$ (Lem~\ref{lem:NA-has-equiv})} (B);
   \draw[->] (C) to node[below] {$\bproj{\cons_c}$} (D);
   \draw[->] (A) to node[left] {$k$} (C);
   \draw[->] (B) to node[right] {$k$} (D);
 \end{tikzpicture}
\end{equation}

We do induction on $\alpha : \NR A$.
The goal is an equality in a $1$-type, i.e.\ a set, which means that we only have to check the constructors $\eta$ and $\tau$.
Tracing the explicit construction in Lemma~\ref{lem:FG-has-NR-structure} through the square, we can check directly that the square commutes in both cases (strictly speaking, in the case for $\tau$, it is a cube):
\begin{equation} \label{eq:f-maps1}
 \begin{tikzpicture}[x= 1cm , y = -0.4cm,baseline=(current bounding box.center)]
   \node[] (A) at (0,0) {$\eta(x)$};
   \node[] (B) at (4,0) {$\eta(cx)$};
   \node[] (C) at (0,4) {$\bproj{\vec{\cons_x}}(\bproj \unit)$};
   \node[] (D) at (4,4) {$\bproj{\cons_c}(\bproj{\vec{\cons_x}}(\bproj \unit))$};
   \draw[|->] (A) to node[above] {} (B);
   \draw[|->] (C) to node[below] {} (D);
   \draw[|->] (A) to node[left] {} (C);
   \draw[|->] (B) to node[right] {} (D);
 \end{tikzpicture}
\end{equation}
and:
\begin{equation} \label{eq:f-maps2}
 \begin{tikzpicture}[x= 1cm , y = -0.4cm,baseline=(current bounding box.center)]
   \node[] (A) at (0,0) {$\tau(x,a,y)$};
   \node[] (B) at (4,0) {$\tau(cx,a,y)$};
   \node[] (C) at (0,4) {$\mathsf{ap}_{|\vec{\cons(x)}|}(|\mu_2|(a,y))$};
   \node[] (D) at (4,4) {$\mathsf{ap}_{|\cons_c|(|\vec{\cons(x)}|)}(|\mu_2|(a,y))$};
   \draw[|->] (A) to node[above] {} (B);
   \draw[|->] (C) to node[below] {} (D);
   \draw[|->] (A) to node[left] {} (C);
   \draw[|->] (B) to node[right] {} (D);
 \end{tikzpicture}
\end{equation}
The commutativity is judgmental in the first square, and the second square only uses the usual equality $\mathsf{ap}_g \circ \mathsf{ap}_f = \mathsf{ap}_{g \circ f}$.
\end{proof}

This finally allows us to show:
\mainthm*
\begin{proof}
 By the previous lemma, the composition of the maps in \eqref{eq:two-maps} is an $\FG A$-algebra map.
 But so is the map $\bproj - : \FG A \to \trunc{1}{\FG A}$ by Lemma~\ref{lem:trFA-has-FA-str}.
 Since $\FG A$ is the \emph{initial} such algebra, these two functions must coincide,
 which means that $\bproj - : \FG A \to \trunc 1 {\FG A}$ factors through $\NR A$.
 We know from Lemma~\ref{lem:NA-is-set} that $\NR A$ is a set.
 This implies that $\trunc{1}{\FG A}$ is a set, which is the second part of the theorem.
 
 To see the first part, take $q : \FG A$.
 $\FG A$ having trivial fundamental groups means that $\trunc{0}{q =_{\FG A} q}$ is contractible.
 By~\cite{hott-book}, we have
 \begin{equation}
  \trunc{0}{q =_{\FG A} q} \; \simeq \; \left( \bproj{q} =_{\trunc{1}{\FG A}} \bproj{q}\right).
 \end{equation}
 The second type is contractible since $\trunc{1}{\FG A}$ is a set.
\end{proof}

Having proved the main result, we add two results that now have become very easy:
\begin{lemma} \label{lem:FANA}
 For a set $A$, the two approximations of the free higher group which we have considered are equivalent, i.e.\ $\trunc{1}{\FG A} \simeq \NR A$.
\end{lemma}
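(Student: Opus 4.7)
The plan is to explicitly construct maps in both directions and verify that they are mutually inverse, reusing essentially all the infrastructure already assembled in the proof of Theorem~\ref{thm:mainthm}. The map $k : \NR A \to \trunc{1}{\FG A}$ is already available from Lemma~\ref{lem:FG-has-NR-structure}. For the other direction, I would first observe that the $\FG A$-algebra map $h : \FG A \to \NR A$ from Corollary~\ref{cor:NA-has-FA-structure} lands in a set by Lemma~\ref{lem:NA-is-set}, and hence in a $1$-type; by the universal property of the $1$-truncation it factors uniquely as $h = \bar h \circ \bproj{-}$ for some $\bar h : \trunc{1}{\FG A} \to \NR A$.

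For the composite $k \circ \bar h : \trunc{1}{\FG A} \to \trunc{1}{\FG A}$ I would reuse the key computation from the proof of Theorem~\ref{thm:mainthm}: namely that $k \circ h = \bproj{-}$, obtained there from Lemma~\ref{lem:map-is-double-alg} together with the initiality of $\FG A$. Composing on the right with $\bproj{-}$ gives $k \circ \bar h \circ \bproj{-} = k \circ h = \bproj{-}$. Now $\trunc{1}{\FG A}$ is a set (the other half of Theorem~\ref{thm:mainthm}), so for a fixed $z : \trunc{1}{\FG A}$ the equation $k(\bar h(z)) = z$ is a mere proposition; by the induction principle of the $1$-truncation it suffices to prove it for $z \equiv \bproj{q}$, which is exactly what we just established.

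For the other composite $\bar h \circ k : \NR A \to \NR A$ I would argue by induction on $\alpha : \NR A$ that $\bar h(k(\alpha)) = \alpha$. Since $\NR A$ is a set, the goal is propositional, so the cases for $\swap$, $\overlap$, $\trconst$, and even $\tau$ are automatic, and only the point case $\alpha \equiv \eta(x)$ needs attention. Here I would unfold the explicit definitions: Lemma~\ref{lem:FG-has-NR-structure} gives $k(\eta(x)) \equiv \bproj{\vec{\cons_x}(\unit)}$, so $\bar h(k(\eta(x))) = h(\vec{\cons_x}(\unit))$, and because $h$ is an $\FG A$-algebra morphism into the structure built from $f$ in Lemma~\ref{lem:NA-has-equiv}, this rewrites to $\vec{f_x}(\eta(\nil))$. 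A straightforward induction on $x$ using $f_c(\eta(y)) \equiv \eta(cy)$ and $f_{\invert c}(\eta(y)) \equiv \eta(\invert c y)$ then reduces this to $\eta(x)$.

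The only genuinely delicate step is the construction of $\bar h$ and the use of $k \circ h = \bproj{-}$, but both are immediate consequences of work already done. The remaining ``obstacle'' is purely bookkeeping: making sure that the $\FG A$-algebra structure on $\NR A$ constructed in Corollary~\ref{cor:NA-has-FA-structure}, the $\NR A$-algebra structure on $\trunc{1}{\FG A}$ from Lemma~\ref{lem:FG-has-NR-structure}, and the factorisation $\bar h$ all line up judgmentally enough for the final induction to go through term-by-term; but since the set-truncation of the ambient type kills every coherence, no new equalities need to be produced by hand.
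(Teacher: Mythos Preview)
Your proposal is correct and follows essentially the same route as the paper: the paper observes that the retraction $k \circ \bar h = \mathsf{id}$ is already implicit in the proof of Theorem~\ref{thm:mainthm}, and then says that the remaining direction $\bar h \circ k = \mathsf{id}$ is an easy induction on $\NR A$ once everything is known to be a set. You have simply unpacked both halves in more detail (in particular, spelling out the list induction for the $\eta$ case), which the paper leaves to the reader.
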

\begin{proof}
 From the argument in the proof of the previous theorem, we can follow that $\trunc 1 {\FG A}$ is a retract of $\NR A$.
 Thus, we still need to show that the composition $\NR A \to \trunc 1 {\FG A} \to \NR A$ is the identity.
 But now that we know that everything is a set, it is easy to do this by induction on $\NR A$.
\end{proof}

\begin{theorem}
 The type $\trunc{1}{\FG A}$ is equivalent to the purely set-based free group over $A$ as constructed in \cite[Chp 6.11]{hott-book}.
 If Question~\ref{eq:main-question} can be answered positively, then our free group does indeed generalise the free group construction of \cite{hott-book} from only sets to arbitrary types.
\end{theorem}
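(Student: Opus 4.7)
The plan is to use Lemma~\ref{lem:FANA} to reduce everything to comparing $\NR A$ with the set-quotient construction $G_A$ of \cite[Thm~6.11.7]{hott-book}. Recall that $G_A$ is the set-quotient of $\LS A$ by the equivalence relation generated by $y \concat a \concat \invert a \concat z \sim y \concat z$. Since $\NR A$ is a set by Lemma~\ref{lem:NA-is-set}, the constructor $\tau$ exactly exhibits this relation on $\NR A$, so the universal property of set-quotients yields a canonical map $\varphi : G_A \to \NR A$ extending $x \mapsto \eta(x)$.

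Conversely, I would build a map $\psi : \NR A \to G_A$ using the universal property of $\NR A$, i.e.\ by giving $G_A$ the structure of an $\NR A$-algebra. The $\eta$-component sends $x$ to its equivalence class $[x]$; the $\tau$-component comes directly from the defining relation of the quotient; the $\swap$- and $\overlap$-components are automatic because $G_A$ is a set and the goals are equalities of equalities; and the $\trconst$-component holds because a set is in particular $1$-truncated. Homotopy initiality of $\NR A$ then produces $\psi$.

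The verification that $\varphi$ and $\psi$ are mutually inverse is routine: both round-trips are endomorphisms of sets whose domain is generated (as an algebra or as a quotient) by the image of $\LS A$, on which the composites act as the identity on the nose. More explicitly, $\psi \circ \varphi = \mathsf{id}_{G_A}$ follows from the universal property of the set-quotient applied on $\LS A$, and $\varphi \circ \psi = \mathsf{id}_{\NR A}$ follows by induction on $\NR A$: the higher-constructor cases are trivial since $\NR A$ is a set, and the $\eta$- and $\tau$-cases are immediate from the definitions of $\varphi$ and $\psi$. Composing with the equivalence $\trunc 1 {\FG A} \simeq \NR A$ of Lemma~\ref{lem:FANA} gives the first statement.

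For the second part, suppose Question~\ref{eq:main-question} is answered positively, so $\FG A$ is itself a set. Then $\FG A$ is in particular $1$-truncated, so $\bproj - : \FG A \to \trunc 1 {\FG A}$ is an equivalence. Combining this with the first part of the theorem yields $\FG A \simeq G_A$, showing that the free $\infty$-group construction generalises the set-based one. I do not expect any real obstacle here: the only moving parts are the two universal properties and the fact, already established, that $\NR A$ is a set.
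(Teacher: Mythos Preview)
Your proposal is correct and follows essentially the same approach as the paper: reduce via Lemma~\ref{lem:FANA} to comparing $\NR A$ with the set-quotient $G_A$, and use that once $\NR A$ is known to be a set the constructors $\swap$ and $\overlap$ become redundant so that the universal properties of $\NR A$ and of the set-quotient match up. The paper compresses your explicit $\varphi$/$\psi$ construction into the single remark that ``what remains is just this set-quotient'', but the content is the same.
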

\begin{proof}
 Since $\NR A$ is a set by Lemma~\ref{lem:NA-is-set}, it is easy to see that it is equivalent to the set-quotient of $\LS A$ by the relation that identifies a list with the list then one gets after reducing; this is essentially because, when we know that $\NR A$ is a set, the constructors $\swap$ and $\overlap$ become obsolete, and what remains is just this set-quotient.
 But this set-quotient is exactly the purely set-based free group of \cite{hott-book} by \cite[Thm 6.11.7]{hott-book}.
 
 If Question~\ref{eq:main-question} turns out to have a positive answer, then $\FG A$ and $\trunc{1}{\FG A}$ are equivalent, and everything that holds for the latter is also true for the former.
\end{proof}

\section{Conclusions} \label{sec:conclusions}

The central and guiding question of this paper was the problem of showing that the free $\infty$-group $\FG A$ over a set is a set as well.
We have proved a first approximation of this, namely that $\FG A$ has trivial fundamental groups.
This is done entirely in ``book HoTT'', the type theory developed in~\cite{hott-book}.
It would be very interesting to formalise the complete argument in a proof assistant,
and we expect that this would be challenging.
For example, the use of list concatenation in the constructors of the higher inductive type $\NR A$ would lead to many application of \emph{transport} (\emph{substitution}).
It is likely that a different representation of $\NR A$ and the reduction relation would enable a more elegant formalisation.
For a human reader, the presentation in terms of lists is the most intuitive and understandable one that we could think of.

Brunerie has discussed the James construction in homotopy type theory~\cite{DBLP:journals/corr/abs-1710-10307}.
In this context, a type $A$ with a point $\star_A : A$ is given, and the higher inductive type $JA$ is defined to be the \emph{free monoid} over $A$ where $\star_A$ plays the role of the neutral element.
Brunerie then constructs a non-recursive version of $JA$.
Of special interest for him is the case that $A$ is connected (i.e.\ $\trunc{0}{A}$ is contractible), and in this case, $JA$ becomes very similar to our free group.
However, connectedness would be a very unnatural assumption in the present paper; in fact, since we are interested in the case that $A$ is a set, our case of interest is orthogonal to Brunerie's.
If $A$ is not known to be connected, then, compared to our $\FG A$, $JA$ is lacking the condition that every $\cons_a$ is invertible, which is the main source of difficulty in our work.

Related to the current paper is also previous work by Capriotti, Vezzosi, and the current first author~\cite{capKraVez_elimTruncs}.
That work gives a necessary and sufficient condition for a function $X \to Y$ to factor through $\trunc n X$, assuming that $Y$ is $(n+1)$-truncated.
In the current paper, we have been particularly interested in the situation that $n$ is $0$, $X$ is the ``level $0$ approximation'' of the free group (see the description after Definition~\ref{def:NA}), and $Y$ is $\trunc 1 {\FG A}$.
The reason why we have not directly applied the result of \cite{capKraVez_elimTruncs} is that, in our case of interest, showing the mentioned condition is tricky.
This difficulty corresponds to what in our presentation has made the more refined approximation with the constructors $\swap$ and $\overlap$ necessary.
We do not know whether there is an alternative proof of our main result which uses \cite{capKraVez_elimTruncs} directly.

Let us further analyse the methods we have used in the paper.
In principle, the strategy which we have developed should be applicable to more general results than the one we have proved; for example, with some more effort, we expect that it should be possible to show that $\trunc 2 {\FG{A}}$ and $\trunc 3 {\FG{A}}$ (which are better approximations to $\FG{A}$) are sets.
The obvious attempt to do this is to work with a ``better'' non-recursive approximation, i.e.\ a refined version of $\NR A$ which would use higher path constructors to guarantee the coherence of $\swap$ and $\overlap$.
One would then include a $2$- or $3$-truncation constructor instead of the $1$-truncation constructor $\trconst$.
It seems plausible that this could work;
for example, instead of constructing a weakly constant function
\begin{equation} \label{eq:Red-to-eq}
 \Red(x) \to \eta(x) = \eta(\nil)
\end{equation}
as in Lemma~\ref{lem:redIsNeutral-wconst}, we would have to construct a constant function satisfying one or more coherence conditions~\cite{kraus_generaluniversalproperty}, and the new constructors of $\NR A$ would be chosen in such a way that this would be possible.

The additional value that such a generalisation would give us is unclear.
Of course, what we want is to show that $\FG A$ is a set, not just a finite truncation of it.
If we try to use our approach,
it seems we would need to find a way to encode the \emph{whole infinite} tower of coherences in a non-recursive type,
and it looks suspiciously similar to the long-standing open problem of defining semisimplicial types in HoTT~\cite{uf:semisimplicialtypes}.
(To clarify, we would not need a single HIT with infinitely many constructors, since we could take a sequential colimit; and the absence of a general version of Whitehead's principle does not seem to be a problem as long as we can show that \eqref{eq:Red-to-eq} satisfies the coherence conditions given in~\cite{kraus_generaluniversalproperty}, which does not rely on hypercompleteness either.)

Our problem of showing that $\FG A$ is a set is not much different from the open problem of HoTT which asks whether the suspension of a set is a $1$-type; as we have already discussed, what we are asking is essentially whether a the suspension of a set with a distinguished isolated point is a $1$-type.
Thus, our question is slightly weaker and, as far as we can see, an answer to the weaker question would not be sufficient to answer the more general question.\footnote{Related is the discussion \emph{Does ``adding a path'' preserve truncation levels?} at \url{https://groups.google.com/forum/\#!topic/homotopytypetheory/gVmcvaOeD5c}.}
However, it seems plausible that an approach similar to ours is applicable to the more general question as well.
In this case, being able to encode infinite towers of coherences could potentially be key to both open problems, although of course there would still be a lot of work to do (which might or might not even be impossible).

Theories such as Voevodsky's \emph{homotopy type system} (HTS)~\cite{voe:hts}, \emph{two-level type theory} (2LTT)~\cite{alt-cap-kra:two-level,ann-cap-kra:two-level} or \emph{computational higher type theory}~\cite{2017arXiv171201800A} are variations of standard HoTT in which such infinite structures can be constructed.
We believe it would be worth investigating whether the ``suspension of a set'' problem can be resolved in such systems.
Our preliminary investigations hint that it is at least be possible to define a ``completely non-recursive'' version of $\FG A$, which would be a starting point.
However, actually \emph{using} this construction to mimic the proof that we have given in this paper is, of course, a completely different story.

If (we are now in the realm of complete speculation) it turns out that HTS can prove that the suspension of a set is a $1$-type, it would be even more interesting whether ``standard HoTT'' can do it as well.
This is because one would need to come up with a completely different argument in standard HoTT, and if it turns out that the open problem is independent of standard HoTT, hope for a conservativity result would be lost for all theories that are powerful enough to encode semisimplicial types.
Recall that we have a conservativity result for 2LTT, due to Capriotti~\cite{paolo:thesis}, which says that a fibrant type in 2LTT can only be inhabited if the corresponding type in HoTT (assuming it exists) is inhabited as well.
This however only works for a version of 2LTT where we do \emph{not} have semisimplicial types in the usual formulation (we only have semisimplicial types indexed over the pretype of strict natural numbers, but not over the type fibrant natural numbers).
Thus, in this version of 2LTT, the sketched approach to solve the open problem regarding the suspension of a set would not work.
This might be more than a coincidence.

\subsubsection*{Acknowledgements}
 We are very grateful to Paolo Capriotti for many discussions on the topic, and for several remarks which have influenced this paper.
 Most importantly, the construction of a free group using the constructors of the free monoid and conditions ensuring that $\cons_a$ is an equivalence is due to Paolo, as well as the decomposition shown in~\eqref{eq:adj-diag}.
 It was also him who suggested using the statement of Lemma~\ref{lem:map-is-double-alg} to complete the main proof of the paper,
 which has led to a cleaner proof than if we had done it with a more direct argument.
 We further thank Rafa\"el Bocquet for discussions on free groups,
 and the anonymous reviewers whose comments have helped us to improve the presentation and readability of the paper.

\bibliographystyle{plain}
\bibliography{master}

\end{document}